\newtheorem{definition}{Definition}
\newtheorem{theorem}{Theorem}
\begin{document}
%
\title{Game-based Pricing and Task Offloading in Mobile Edge Computing Enabled Edge-Cloud Systems}
%
%
%

\author{Yi Su,
	Wenhao Fan, \IEEEmembership{Member, IEEE},
	Yuan'an Liu, \IEEEmembership{Member, IEEE},
	and Fan Wu
\thanks{Yi Su, Wenhao Fan, Yuan'an Liu and Fan Wu are with the School of Electronic Engineering, and Beijing Key Laboratory of Work Safety Intelligent Monitoring, Beijing University of Posts and Telecommunications, Beijing, China (e-mail: whfan@bupt.edu.cn, yuliu@bupt.edu.cn).}}

\maketitle

\begin{abstract}
As a momentous enabling of the Internet of things (IoT), mobile edge computing (MEC) provides IoT mobile devices (MD) with powerful external computing and storage resources. However, a mechanism addressing distributed task offloading and price competition for the open exchange marketplace has not been established properly, which has become a huge obstacle to MEC's application in the IoT market. In this paper, we formulate a distributed mechanism to analyze the interaction between OSPs and IoT MDs in the MEC enabled edge-cloud system by appling multi-leader multi-follower two-tier Stackelberg game theory. We first prove the existence of the Stackelberg equilibrium, and then we propose two distributed algorithms, namely iterative proximal offloading algorithm (IPOA) and iterative Stackelberg game pricing algorithm (ISPA). The IPOA solves the follower non-cooperative game among IoT MDs and ISPA uses backward induction to deal with the price competition among OSPs. Experimental results show that IPOA can markedly reduce the disutility of IoT MDs compared with other traditional task offloading schemes and the price of anarchy is always less than 150\%. Besides, results also demonstrate that ISPA is reliable in boosting the revenue of OSPs.
\end{abstract}

\begin{IEEEkeywords}
Internet of things, mobile edge computing, Stackelberg game
\end{IEEEkeywords}

%
\IEEEpeerreviewmaketitle

\section{Introduction} \label{sec-1}
\IEEEPARstart{W}{ith} the rapid development and popularization of Internet of Things (IoT) technology, the number of deployed IoT devices is experiencing explosive growth \cite{lin2017survey}. It is estimated that the number of IoT devices will reach an astonishing 21.5 billion by 2025 \cite{Lueth2018iotnumber}. These devices will generate massive data and demand further processing, providing intelligence to both service providers and users \cite{yu2017survey}. However, most IoT devices, especially mobile devices (MDs) have highly constrained computing power and battery capacity, which means that it is unrealistic to meet the demands of IoT applications by processing all the raw data locally. Although traditional cloud computing allows IoT MDs to offload their computation tasks to the remote data center so as to utilize powerful central processing unit (CPU) and ample storage capabilities of cloud \cite{othman2013survey}, considerable transmission costs are incurred because the data centers that cloud computing relies on are geographically far away from MDs. In addition to this, offloading tasks to the cloud imposes huge additional burdens on the backbone network traffic load and the performance of the network will worsen with increasing data size.

In order to overcome the challenges associated with centralized cloud computing environments, the European Telecommunications Standards Institute (ETSI) introduced the term Mobile Edge Computing (MEC) in 2014 and further dropped the 'Mobile' out of MEC and renamed it as Multi-access Edge Computing in order to broaden its applicability into heterogeneous networks in 2016 \cite{taleb2017multi}. MEC is initiated aiming to create an open environment across multi-vendor cloud platforms located at the edge of access network, accessible by application/service providers and third parties \cite{hu2015mobile}.
Seen as a key technology for 5G wireless systems, the servers of MEC can be located at the base stations (BSs) in a fully distributed manner, enabling the delivery of locally-relevant, fast services. Offloading tasks to the MEC servers at BSs will bring markedly diminishing transmission delay compared with mobile cloud offloading. With the help of MEC, sensitive data can also be confined to local zones and not exposed to the internet, enhancing IoT information security. As lots of literature advocate \cite{ren2019survey,guo2018collaborative,ebrahimzadeh2020cooperative}, MEC is not a substitute but a complement to cloud computing and a cooperative edge-cloud system can provide IoT MDs with diverse offloading choices to improve their quality of experience (QoE). 

Although mobile computation offloading in MEC has been extensively studied in the literature \cite{guo2018collaborative,ebrahimzadeh2020cooperative,mach2017mobile,yi2019multi,8241344}, there are still many challenging issues with integrating MEC to assist IoT: (\romannumeral1) IoT MDs are more sensitive to delay and energy consumption and the computation tasks of different IoT MDs vary greatly, all of which make it difficult to model the utility function of IoT MDs; (\romannumeral2) Due to IoT connecting a diverse assortment of devices belonging to different parties, the service resources may not belong to a single offloading service provider (OSPs) \cite{yu2017survey}; (\romannumeral3) Selfish OSPs and IoT MDs are interested in optimizing their own utility individually in the collaborative edge-cloud MEC system, which further increases the difficulty of offloading strategy optimization and system stability. Many studies leverage auction theory to assign computing resources and design pricing policies between multiple OSPs and IoT MDs \cite{7296648,8744396,zhang2019near}. However, in those studies, there is always a trustworthy third party acting as an auctioneer, which may not be found in some IoT scenarios. Therefore, designing a distributed computing offloading and pricing mechanism in such an authority-lacking competitive edge-cloud system is still an urgent problem to be solved, which interests us in conducting an in-depth study of this problem.

In this paper, we are concentrating on designing a novel distributed task offloading and pricing mechanism in an edge-cloud enabled IoT environment, where multiple edge or cloud OSPs provide distinct offloading services and IoT MDs will offload their computation tasks to different edge servers and cloud servers proportionally, according to the prices announced by OSPs. To address the urgent problem of task offloading and pricing in a fully distributed manner, we place the trading between OSPs and IoT MDs in an open exchange marketplace and the mechanism based on economic principles is applied in such marketplace. OSPs are seen as leaders and IoT MDs are considered as followers because the offloading strategies of IoT MDs are determined after OSPs' prices are given. The interaction between OSPs and IoT MDs is formulated as a multi-leader multi-follower two-tier Stackelberg game \cite{leyffer2005solving} and Stackelberg equilibrium (SE) exists in our proposed mechanism. Our main contributions include:

\begin{enumerate}
	\item \textit{Modeling}. A novel disutility function is formulated for IoT MDs to quantify their QoE. Comprehensively considering queuing models at different stages and the individual differences in the importance of delay, energy consumption and payment, this model can accurately reflect how the pricing of OSPs and the offloading strategies of IoT MDs affect the QoE of IoT MDs.
	\item \textit{Game analysis}. The interaction between OSPs and IoT MDs is regarded as a multi-leader multi-follower two-tier Stackelberg game, which is composed of a leader non-cooperative game for OSPs and a follower non-cooperative game for IoT MDs. We employ the variational inequality (VI) approach to analyze the existence and uniqueness of Nash equilibrium (NE) in the follower non-cooperative game given the prices of OSPs. For the leader non-cooperative game, the existence of NE is also derived, verifying the existence of Stackelberg equilibrium (SE).
	\item \textit{Algorithms design}. A distributed iterative proximal offloading algorithm (IPOA) is proposed to address the problem of task offloading given the prices of all OSPs. This algorithm can converge to an NE in limited iterations. Additionally, we propose another iteritive Stackelberg game pricing algorithm (ISPA) to solve the leader non-cooperative pricing game among OSPs by applying backward induction. 
	\item \textit{Performance evaluation}. Simulations are conducted to investigate the performance of our proposed mechanism. Results show that IPOA can markedly improve IoT MDs’ QoE compared with other traditional task offloading schemes. The price of anarchy (PoA), which reflects the gap in overall performance between NE and socially optimal offloading, is bounded. We further study the ISPA and prove that OSPs can find appropriate prices for all OSPs.	     
\end{enumerate}

The rest of the paper is organized as follows. In Sec. \ref{sec-2}, we introduce related works. The system model is described in Sec. \ref{sec-3}. Based on the system model, we analyze the Stackelberg game between OSPs and IoT MDs in Sec. \ref{sec-4}. Furthermore, in Sec. \ref{sec-5}, two algorithms are proposed to deal with distributed task offloading and pricing problems. Sec. \ref{sec-6} investigates the performance of our proposed algorithms. We conclude the paper in Sec. \ref{sec-7}.

 




\section{Related Works} \label{sec-2}

The computation offloading, and partial offloading in particular, is a very complex process affected by different factors, such as users preferences, radio and backhaul connection quality and cloud capabilities \cite{othman2013survey}. Considering different factors or adopting different methods, many literatures have launched studies into mobile computing offloading in MEC. Fan et al. \cite{8241344} considered the problem of excessive load on a specific MEC-BS, and the scheme they proposed could not optimize the performance of the entire system. In \cite{ebrahimzadeh2020cooperative}, the authors investigated the performance gains of cooperative edge-cloud computation offloading for MEC enabled FiWi enhanced HetNets and presented a self-organization based mechanism to enable mobile users. Liu et all. \cite{liu2017multiobjective} brought a thorough study on the energy consumption, execution delay and payment cost of offloading processes in a fog computing system where only one OSP exists. They proposed an algorithm aiming to minimize the average computing cost of all MDs in a central manner, which did not apply to competitive environments.
The authors of \cite{chen2018optimized} considered the task queue state, the energy queue state as well as the channel qualities between MU and BSs in the MEC systems and modeled the optimal computation offloading problem as a Markov decision process. Deep reinforcement learning based algorithms were proposed to optimize task computation experience of mobile users. Although there are many approaches using deep reinforcement learning to design computation offloading policies in the MEC, even some works like \cite{chen2019multi} applies deep reinforcenment learning to solve the competition among OSPs, we have not yet seen the possibility of applying deep reinforcement learning based optimization methods to address computation offloading and pricing jointly in distributed edge-cloud systems.                  

As a well-researched economic theory, auction theory has been employed by many literatures to allocate resources from edges to devices in non-cooperative scenarios. In \cite{7296648}, Jin et al. proposed an incentive-compatible auction mechanism (ICAM) based on the single-round double auction model for the cloudlet scenario. The ICAM can effectively allocate cloudlets (sellers) to satisfy the service demands of mobile devices (buyers) and determine the pricing. Wang et al. designed an online profit maximization multi-round auction (PMMRA) mechanism for the computational resource trading between edge clouds and mobile devices in a competitive MEC environment and an outperforming profit of edge clouds was obtained in \cite{8744396}. In \cite{zhang2019near}, energy harvesting-enabled MDs were considered as offloading service providers and the offloaded tasks generated by IoT devices were optimally assigned through a proposed online rewards-optimal auction (RoA). Despite the focus on the resource allocation on the provider side, these studies ignore the impact of MD's offloading strategy.

Stackelberg game model is a classic model in game theory and several works advocate Stackelberg game theory as an effective solution concept for pricing and resource allocating problems in the game between resource providers and consumers \cite{zhang2017computing, wang2017multi, 8478374, jie2020game}. In \cite{8478374}, the complicated interactions among unmanned aerial vehicles (UAVs) and BSs as well as the cyclic dependency was considered as a Stackelberg game where BSs were leaders determining the bandwidth allocated to each UAV and UAVs acted as followers to select the bandwidth. The authors solved the Stackelberg game through backward induction as the UAV payoff information was requested by BSs. Jie et al. \cite{jie2020game} utilized the double-stage Stackelberg game to propose an optimal resource allocation scheme between cloud center (CC) and data users (DUs) by introducing fog service providers (FSPs) for a fog-based industrial internet of things (IIoT) environment. They modelled the competition of FSPs as a non-cooperative game in respect of the fact that FSPs competed to buy resources from the CC in order to provide paid services to DUs. However, they did not consider the competition between DUs assuming the resources needed by each DU are distinct. Three algorithms were designed and Nash equilibrium and Stackelberg equilibrium were achieved in the end.

To the best of our knowledge, there are still no solutions to address task offloading and pricing jointly in a fully distributed manner for the competitive heterogeneous edge-cloud systems, where not only the utility information of each IoT MD is private but also the pricing of an OSP will not be known in advance by other OSPs before it is released. Accurately, how to achieve the equilibrium of such a system is still a challenging problem.                   

\section{System Models} \label{sec-3}

\begin{figure}[!t]
	\centering
	\includegraphics[width=3in]{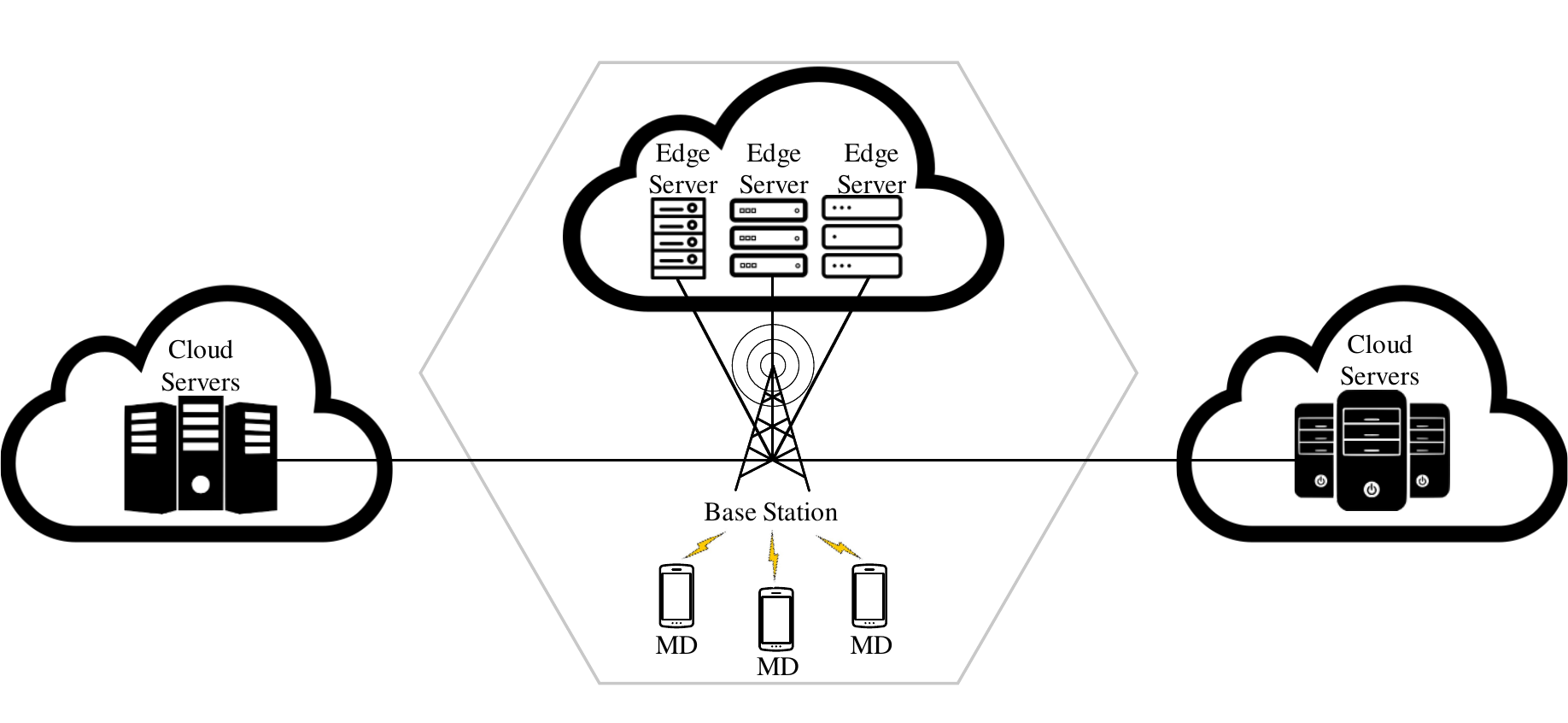}
	\DeclareGraphicsExtensions.
	\caption{System architecture}
	\label{fig:system_architecture}
\end{figure}

As Fig. \ref{fig:system_architecture} depicts, we consider there is a set of $ M $ IoT MDs, which is denoted as ${\cal M} = \left\{ {1,...,M} \right\}$, within the coverage of a BS and it is assumed that the $ M $ IoT MDs can can only access the same BS at the same time. All of these IoT MDs can access to the BS simultaneously. There are $ N $ OSPs in the system, which consist of $ N_e $ edge computing OSPs and $ N_c $ cloud computing OSPs. Each edge computing OSP deploys an edge server at the BS and each cloud computing OSP owns a cloud computation center connected to BS via optical backbone network \cite{ebrahimzadeh2020cooperative}. We define the set of OSPs as $ {\cal N}=\left\{1,...,N\right\} $, of which the first $ N_c $ elements represent cloud computing OSPs and the rest are edge computing OSPs. The prices announced by OSPs are broadcast via BS periodically. After receiving the prices, IoT MDs determine their offloading strategies and send their offloading requests to the BS. Compared with task input data, the communication data used for publishing prices and notifying offloading requests is much smaller, which ensures the feasibility of distributed algorithms.

In this paper, for a given IoT MD $ i $, who is involved in ${\cal M}$, we assume it only generates independently offloadable tasks following a Poisson process with an average rate $ \lambda_{i} $ same as many papers \cite{ebrahimzadeh2020cooperative,liu2017multiobjective,shah2017joint}. Each task of IoT MD $ i $ is characterized by $ c_{i} $ and $ z_{i} $, which denote the average number of CPU cycles required and the average size of computation input data (e.g., program codes and input parameters) \cite{ebrahimzadeh2020cooperative}. The generated tasks can not only be executed locally but can also be partially offloaded to OSPs' servers. In other words, IoT MD $ i $ can offload each of its tasks to any OSPs. We also assume that a task is offloaded from IoT MD $ i $ to OSP $ j $ with a probability of $ \alpha_{i,j} $ (or equivalently the long-term offloading ratio \cite{shah2017joint}). Therefore, for MD $ i $, there is an offloading strategy vector represented as
\begin{equation*}
{\bm{ \alpha }_{i}} = {\left( {{\alpha _{i,1}},...,{\alpha _{i,N}}} \right)^T}.
\end{equation*}
Obviously, vector $ \bm{ \alpha }_{i} $ is subjected to the constraint ${\alpha _{i,j}} \in \left[ {0,1} \right]$ and ${\rm{0}} \le \sum\nolimits_{j \in {\cal N}} {{\alpha _{i,j}} \le 1}$.

We summarize all the key notations in Table. \ref{table1}. What needs to be emphasized is, for simplicity, in the rest of the paper, when a task is offloaded to a server belonging to OSP $ j $, we will directly say that the task is offloaded to OSP $ j $.  

\begin{table}
	\caption{Notations}
	\label{table1}
	\begin{center}
	\setlength{\tabcolsep}{2pt}
	\begin{tabular}{c|p{150px}}
		\hline
		Notation                            & Description\\
		\hline
		$M$                                 & the number of IoT MDs\\
		\hline
		${\cal M}$                          & the set of IoT MDs\\
		\hline
		${N}$                               & the number of OSPs\\
		\hline
		$N_e$                               & the number of edge computing OSPs\\
		\hline
		$N_c$                               & the number of cloud computing OSPs\\
		\hline
		${\cal N}$                          & the set of OSPs\\
		\hline
		${\lambda _i}$                      & the average task arrival rate of IoT MD $i$\\
		\hline
		\multirow{2}{*}{\centering $ c_i $} & the average number of CPU cycles required \\
							  			    & by each task of IoT MD $ i $ \\
		\hline
		\multirow{2}{*}{\centering $ z_i $} & the average size of computation input data \\
											& of each task of IoT MD $ i $ \\
		\hline
		\multirow{2}{*}{$ \alpha _{i,j} $}  & the probability that a task is offloaded from \\
			                                & IoT MD $i$ to OSP $j$\\
		\hline
		$ \bm{\alpha} _i $                  & the offloading strategy vector of IoT MD $ i $\\
		\hline
		$ \bm{A} $                          & the offloading strategies profile of all IoT MDs\\
		\hline
		\multirow{2}{*}{$ \bm{A}_{-i} $}    & the offloading strategies profile of all IoT MDs\\
		                                    & except $ i $\\
		\hline
		$ f_i^{MD} $                        & the processing capability of IoT MD $ i $ \\
		\hline
		$\varepsilon _i^{local}$            & the local computing power of IoT MD $ i $\\
		\hline
		$ \varepsilon _i^{tx} $             & the transmission power of IoT MD $ i $\\
		\hline
		$ B $                               & the wireless channel bandwidth\\
		\hline
		$ w_0 $                             & the background interference power\\
		\hline
		$ h_i $                             & the channel gain between IoT MD $ i $ and BS\\
		\hline
		\multirow{2}{*}{$ \sigma_i^{2} $}   & the variance of service times in the IoT MD \\
		                                    & $ i $'s wireless interface \\
		\hline
		\multirow{2}{*}{$ a_j $}            & the number of optical amplifying from the\\                               & BS to the cloud computation center\\
		\hline
		$ R $                               & the uplink data rate of the optical fiber network\\
		\hline
		\multirow{2}{*}{$ t $}              & the uplink propagation delay of the  optical\\                           & backbone network\\
		\hline
		\multirow{2}{*}{$ f_j^{OSP} $}      & the computing capability of MEC server\\
		                                    & owned by OSP $ j $\\
		\hline
		$ p_j $                             & the price announced by OSP $ j $\\
		\hline
		$ \bm{p} $                          & the price vector of all OSPs\\
		\hline
		$ \bm{p}_{-j} $                     & the prices announced by all OSPs except $ j $\\
		\hline
		\multirow{2}{*}{$ D_i^{MAX} $}      & the maximum computing delay IoT MD $ i $ \\
		                                    & can accept\\
		\hline
		\multirow{2}{*}{$ E_i^{MAX} $}      & the maximum enrgy consumption IoT MD $ i $ \\
							                & can accept\\
		\hline
		\multirow{2}{*}{$ D_i^{MAX} $}      & the maximum payment cost IoT MD $ i $\\
		                                    & can accept\\
		\hline
		$ \theta _i^{\rm D} $               & the weight factor of delay for IoT MD $ i $\\
		\hline
		$ \theta _i^{\rm E} $               & the weight factor of energy for IoT MD $ i $\\
		\hline
		$ \theta _i^{\rm P} $               & the weight factor of payment for IoT MD $ i $\\
		\hline
		$ U_i^{MD} $                        & the disutility function of IoT MD $ i $\\
		\hline
		$ U_j^{OSP}$                        & the utility function of OSP $ j $\\
		\hline
	\end{tabular}
	\end{center}
\end{table}

\subsection{Local Computing}

Due to the limited computing capability of an IoT MD, we consider there is an $ M/M/1 $ queue model in the local CPU with tasks arriving rate $\left( {1 - \sum\nolimits_{j \in {\cal N}} {{\alpha _{i,j}}} } \right){\lambda _i}$ for every IoT MD $i$ same as \cite{liu2017multiobjective}. Assuming the processing capability of IoT MD $ i $ is $  f_i^{MD}$, the mean delay caused by computing a task locally in IoT MD $i$ is
\begin{equation}
	\label{equ1}
	D_i^{local} = \frac{c_{i}}{{ f_i^{MD} - \left( {1 - \sum\nolimits_{j \in {\cal N}} {{\alpha _{i,j}}} } \right){\lambda _i}{c_{i}}}}.
\end{equation}

We denote the local computing power of IoT MD $i$ as $\varepsilon _i^{local}$, and the energy consumption of computing a task in IoT MD $i$ is
\begin{equation}
	\label{equ2}
	E_i^{local} = \frac{{\varepsilon _i^{local}}{c_{i}}}{{f_i^{MD} - \left( {1 - \sum\nolimits_{j \in {\cal N}} {{\alpha _{i,j}}} } \right){\lambda _i}{c_{i}}}}.
\end{equation}

In particularly, (\ref{equ1}) and (\ref{equ2}) should be subject to the condition $ \left( {1 - \sum\nolimits_{j \in {\cal N}} {{\alpha _{i,j}}} } \right){\lambda _i}{c_{i}} < f_i^{MD} $, so as to guarantee queue stability.   

\subsection{Computation Offloading}

When IoT MD $i$ transmits its tasks to BS, the uplink delay consists of the waiting time in the wireless interface, the service time of the wireless interface to submit the tasks to BS. In this paper, we assume that each IoT MD is equipped with a single antenna and the interference cause by other IoT MDs can not be ignored. According to Shannon-Hartley Theorem, the uplink data rate of IoT MD $i$ via wireless cellular network is
\begin{equation}
	\label{equ3}
	{r_i} = B{\log _2}\left( {1 + \frac{{\varepsilon _i^{tx}{h_i}}}{{{w_0} + \sum\nolimits_{k \in {\cal M},k \ne i} {\varepsilon _k^{tx}{h_k}} }}} \right),
\end{equation}
where $B$ is wireless channel bandwidth and ${\varepsilon _i^{tx}}$ is the transmission power of IoT MD $i$. ${{h_i}}$ is the channel gain between IoT MD $i$ and BS and ${w_0}$ is the background interference power. Obviously, (\ref{equ3}) considers the worst case where most channel noise is brought by all other IoT MDs. It can be noticed that as the number of IoT MDs in the network increases, the uplink data rate of each MD will decrease accordingly.

Without loss of generality, we assume that the arriving traffic at each IoT MD's wireless interface, which is the data stream that needs to be uploaded to the BS, is the input data from the offloaded tasks, which arrives following a Poisson process with arrival rate $ {\lambda_{i}}{c_{i}}{\sum\nolimits_{j \in {\cal N}} {{\alpha _{i,j}}} } $. Thus, the wireless interface queue can be modeled as an $ M/G/1 $ queuing system, and based on the Pollaczed-Khinchin formula \cite{bertsekas1992data}, the average wireless transmission delay of IoT MD $ i $ can be defined as follows:

\begin{equation}
	\label{equ4}
	{D_i^{tx,wireless}}  = \frac{{{\lambda _i}{\overline {S_i^2} }\sum\nolimits_{j \in {\cal N}} {{\alpha _{i,j}}} }}{{2\left( {1 - \frac{{{\lambda _i}{z_i}\sum\nolimits_{j \in {\cal N}} {{\alpha _{i,j}}} }}{{{r_i}}}} \right)}} + \frac{{{z_i}}}{{{r_i}}},
\end{equation}
which is subject to the condition $ {{\lambda _i}{z_i}\sum\nolimits_{j \in {\cal N}} {{\alpha _{i,j}}} } < r_{i} $. In (\ref{equ4}), $ {{{{z_i}} \mathord{\left/{\vphantom {{{z_i}} {{r_i}}}} \right.\kern-\nulldelimiterspace} {{r_i}}}} $ represents the mean service time of each offoaded task in the wireless interface. $ {\overline {S_i^2}} = \sigma _i^2 + {\left( {{{{z_i}} \mathord{\left/{\vphantom {{{z_i}} {{r_i}}}} \right.\kern-\nulldelimiterspace} {{r_i}}}} \right)^2} $ and $ \sigma _i^2 $ denotes the variance of service times. Since the downlink rate is usually much higher than the uplink rate and the size of the output results is much smaller than the input size, the download time of results is neglected as many works \cite{liu2017multiobjective,shah2017joint}. knowing that the transmission power of IoT MD $ i $ keeps fixed during input data transmission, we can get the average energy consumption for transmitting a task of IoT MD $ i $ to BS as follows:
\begin{equation}
	\label{equ5}
	{E_i^{tx}}  = \varepsilon_{i}^{tx}{D_i^{tx,wireless}} = \frac{{{\varepsilon_i^{tx}}{\lambda _i}{\overline {S_i^2} }\sum\nolimits_{j \in {\cal N}} {{\alpha _{i,j}}} }}{{2\left( {1 - \frac{{{\lambda _i}{z_i}\sum\nolimits_{j \in {\cal N}} {{\alpha _{i,j}}} }}{{{r_i}}}} \right)}} + \frac{{{\varepsilon_i^{tx}}{z_i}}}{{{r_i}}}.
\end{equation}  

If IoT MD $i$ chooses to offload its tasks to cloud computing OSP $j$, the input data transmitted to BS should be further delivered to the cloud computing center via the optical backbone network. To simplify our computation as \cite{guo2018collaborative}, the total wired transmission delay can be denoted as follows:
\begin{equation}
	\label{equ6}
	D_{i,j}^{tx,wired} = \left\{ {\begin{array}{*{20}{c}}
			{{a_j}\frac{{{z_i}}}{R} + t ,}&{1 \le j \le {N_c},}\\
			{0,}&{otherwise.}
	\end{array}} \right.	
\end{equation}

Here, $ a_j $ is the number of optical amplifiers from the BS to the cloud computation center owned by OSP $ j $, and $ R $ is the uplink data rate of the optical fiber network. $ t $ denotes the uplink propagation delay.
 
In the following, we will study the delay caused by task computation in the server of OSP $j$. When $ 1 \le j \le {N_c} $, an $ M/M/\infty $ queue is modeled for task computation in a cloud computation center owing to its rich computing resources. Given that there is only one edge server owned by each edge computing OSP, the service queue in an edge computing OSP is modeled as $ M/M/1 $. Therefore, we define the average computation delay of a task offloaded from IoT MD $ i $ to OSP $ j $ as follows:
\begin{equation}
	\label{equ7}
	D_{i,j}^{offload} = \left\{ {\begin{array}{*{20}{c}}
			{\frac{{{c_i}}}{{f_j^{OSP}}},}&{1 \le j \le {N_c},}\\
			{\frac{{{c_i}}}{{f_j^{OSP} - \sum\nolimits_{k \in {\cal M}} {{\alpha _{k,j}}{\lambda _k}{c_k}} }},}&{otherwise,}
	\end{array}} \right.
\end{equation}
where ${ f_j^{OSP}}$ is the service rate of a server owned by OSP $ j $. Matrix ${\bm{{A}}{_{ - i}}}$ represents the offloading strategies of all IoT MDs except $i$. From (\ref{equ7}), we can find that when a task of IoT MD $i$ is computed in an edge server, the delay is affected by offloading strategies of other IoT MDs.

Payment cost is also a significant cost in the process of task offloading. Assuming OSP $ j $ charges ${ p_j } $ for each unit CPU cycle, when IoT MD $ i $ is offloading tasks to OSP $ j $, the money $ i $ should pay to $ j $ for making use of computing resources can be denoted as follows:
\begin{equation}
	\label{equ8}
	P_{i,j} = {p_i}{c_i}{\lambda _i}.
\end{equation}

\subsection{Utility Function}
The QoE of an IoT MD can be affected by the computing delay, energy consumption and payment cost. Given the offloading strategy vector ${\bm{{\alpha}}{ _i}}$, from (\ref{equ1}), (\ref{equ2}), (\ref{equ4}), (\ref{equ5}), (\ref{equ6}), (\ref{equ7}) and (\ref{equ8}), we can get the expectation value of computing delay ${D_i}\left( {\bm{{\alpha}}{ _i},\bm{{A}}{_{ - i}}} \right)$, energy consumption ${E_i}\left( {\bm{{\alpha}}{ _i}} \right)$ and payment cost ${P_i}\left( {\bm{{\alpha}}{ _i}},\bm{p} \right)$ for IoT MD $i$ as follows:
\begin{equation}
	\label{equ9}
	\begin{array}{l}
		{D_i}\left(\bm{\alpha}_{i},\bm{A}_{-i}\right) = \left( {1 - \sum\nolimits_{j \in {\cal N}} {{\alpha _{i,j}}} } \right)D_i^{local}\\
		+\sum\nolimits_{j \in {\cal N}} {{\alpha _{i,j}}\left( {D_i^{tx,wireless} + D_{i,j}^{tx,wired} + D_{i,j}^{offload}} \right)}, 
	\end{array}
\end{equation}
\begin{equation}
	\label{equ10}
	{E_i}\left(\bm{\alpha}_{i}\right) = \left( {1 - \sum\nolimits_{j \in {\cal N}} {{\alpha _{i,j}}} } \right)E_i^{local} + \sum\nolimits_{j \in {\cal N}} {{\alpha _{i,j}}} E_i^{tx},
\end{equation}
\begin{equation}
	\label{equ11}
	{P_i}\left(\bm{\alpha}_{i},\bm{p}\right) = \sum\nolimits_{j \in {\cal N}} {\left( {{\alpha _{i,j}}{p_j}{c_i}{\lambda _i}} \right)},
\end{equation}
where $ \bm{p} $ is the pricing vector of all OSPs.

To jointly combine ${D_i}\left( {\bm{{\alpha}}{ _i},\bm{{A}}{_{ - i}}} \right)$, ${E_i}\left( {\bm{{\alpha}}{ _i}} \right)$ and ${P_i}\left( {\bm{{\alpha}}{ _i}},\bm{p} \right)$ as the quantified form of IoT MD QoE, the weighted method is introduced to solve this problem as \cite{liu2017multiobjective}. Different IoT MDs have distinct tolerances of computing delay, energy consumption and payment cost, thus we define three constants ${D_i^{MAX}}$, ${E_i^{MAX}}$ and ${P_i^{MAX}}$, which represent the maximum computing delay, energy consumption and payment cost the IoT MD $ i $ can accept. Furthermore, for IoT MD $ i $, there are three weight factors $\theta _i^ {\rm D}$, $\theta _i^{\rm E}$ and $\theta _i ^ {\rm P}$, where $\theta _i ^ {\rm D}  + \theta _i ^ {\rm E} + \theta _i ^ {\rm P} = 1$, reflecting the relative importance of computing delay, energy and payment for $ i $. In the distributed task offloading and pricing mechanism, these three weight factors corresponding to an IoT MD $ i $ are privatized by IoT MD $ i $, effectively protecting the privacy of IoT MDs. From (\ref{equ9}), (\ref{equ10}), and (\ref{equ11}), the disutility function of IoT MD $ i $ can be expressed as follows:
\begin{equation}
	\label{equ12}
	\begin{array}{l}
		U_i^{MD}\left( {\bm{{\alpha}} {_i},\bm{{A_{ - i}}},\bm{p}} \right)
		\\= \theta _i^{\rm D}\frac{{{D_i}\left( {\bm{{\alpha}}{ _i},{\bm{A}_{ - i}}} \right)}}{{{D^{MAX}}}} + \theta _i^{\rm E}\frac{{{E_i}\left( {\bm{{\alpha}}{ _i}} \right)}}{{{E^{MAX}}}} + \theta _i^{\rm P}\frac{{{P_i}\left( {\bm{{\alpha}}{ _i},\bm{p}} \right)}}{{{P^{MAX}}}}.
	\end{array}
\end{equation}

In this paper, we assume that all OSPs do not consider other expenditures apart from the revenue earned by charging for the computation offloading. Hence, the utility function of OSP $ j $ can be represented by the revenue function expressed as follows:
\begin{equation}
	\label{equ13}
	U_j^{OSP}\left( {{p_j},{\bm{p}_{ - j}},\bm{A}} \right) = {p_j}\sum\nolimits_{k \in {\cal M}} { {{\lambda _k}{c_k}{\alpha _{k,j}} }},
\end{equation}
where $ \bm{p}_{ - j} $ are the prices announced by all OSPs except $ j $. $ \bm{A} $ is the offloading strategy profile of all MDs. The reason why $ U_j^{OSP} $ is related to $ \bm{p}_{ - j} $ is that the other OSPs' prices can affect the offloading strategies of IoT MDs, thereby indirectly affecting the revenue of OSP $ j $. 

\section{Game analysis} \label{sec-4}
In this section, first of all, the optimization problems of each IoT MD and each OSP are formulated and a multi-leader multi-follower two-tier Stackelberg game model is applied to study the interaction between IoT MDs and OSPs. Then, we define the competition among IoT MDs and the competition among OSPs as non-cooperative games and prove that both have Nash equilibrium solutions, thus proving the existence of Stackelberg equilibrium (SE).

\subsection{Problem Formulation}
To this end, based on previous analytic results on the disutility function of each IoT MD, the optimization problem of IoT MD $ i $,  $ \forall i \in {\cal M} $, can be formulated as $ \bm {P1} $, which is
\begin{equation}
	\label{equ14}
	\begin{aligned}
		\mathop {\min }\limits_{ {{\bm{\alpha} _i}} }&\   U_i^{MD}({\bm{\alpha} _i},{\bm{A}_{ - i}},\bm{p})\\
		s.t. \ & C{\rm{1:}}\ 0 \le \sum\nolimits_{j \in {\cal N}} {{\alpha _{i,j}} \le 1}, \\
		\ & C{\rm{2:}}\ 0 \le {\alpha _{i,j}} \le 1,\forall j \in {\cal N},\\
		\ & C{\rm{3:}}\ \left( {1 - \sum\nolimits_{j \in {\cal N}} {{\alpha _{i,j}}} } \right){\lambda _i}{c_i} < f_i^{MD},\\
		\ & C{\rm{4:}}\ {\lambda_{i}}{z_i}{\sum\nolimits_{j \in {\cal N}} {{\alpha _{i,j}}}} < r_i,\\
		\ & C{\rm{5:}}\ {\sum\nolimits_{k \in {\cal M}} \left({\alpha _{k,j}}{\lambda_k}{c_k}\right)} < f_j^{OSP}, \forall j \in {\cal N},\\
		\ & C{\rm{6:}}\ {D_i}{\left(\bm{\alpha}_i, \bm{A}_{-i}\right)} \le D_i^{MAX},\\
		\ & C{\rm{7:}}\ {E_i}{\left(\bm{\alpha}_i\right)} \le E_i^{MAX},\\
		\ & C{\rm{8:}}\ {P_i}{\left(\bm{\alpha}_i, \bm{p}\right)} \le P_i^{MAX}.
	\end{aligned}
\end{equation}

As for OSPs, according to (\ref{equ7}), the optimization problem of OSP $ j $, $\forall j \in {\cal N}$, can be formulated as $ \bm{P2} $, which is
\begin{equation}
	\label{equ15}
	\begin{array}{l}
		\mathop {\max }\limits_{ {{p_j}}} U_j^{OSP}\left( {{\bm{p}_j},{\bm{p}_{ - j}},\bm{A}} \right),\\
		s.t. \ p_j \ge p_j^{min},
	\end{array}
\end{equation}
where $ p_j^{min} $ is the minimum price OSP $ j $ can announce, reflecting OSP $ j $'s operating cost. 

From the view of an OSP, having no idea of the utility functions of all IoT MDs, each OSP can only dynamically adjust its price according to the amount of tasks offloaded to it. If the price is too high, IoT MDs may offload less tasks to it. Instead, if a low price is announced, a reduced utility may appear. As such, a feasible price that can maximize the revenue is pursued by each OSP.

\subsection{Multi-leader multi-follower two-tier Stackelberg game}
Problems $ \bm{P1} $ and $ \bm{P2} $ together form a multi-leader multi-follower two-tier Stackelberg game. The leaders of the game are all OSPs and all MDs are followers. Consequently, the player set of the game consists of sets ${\cal M}$ and ${\cal N}$. The objective of this game is to find the Stackelberg equilibrium (SE) solution(s) from which neither the leaders nor the followers have incentive to deviate. The formal definition of the SE is given as follows.
\begin{definition}
	(Stackelberg equilibrium, SE)
	Let $ \bm{\alpha}_i^{*} $ be a solution for $ \bm{P1} $ of IoT MD $ i $ and $ p_j^* $ be a solution for $ \bm{P2} $ of OSP $j$. Then, the pair $\left( {{\bm{A} ^ * },{\bm{p}^ * }} \right)$ is an SE for the proposed Stackelberg game if for any $ \bm{A}  \in \bm{{\cal A}}$ and $\bm{p} \in \bm{{\cal P}} $, where $ \bm{{\cal A}} \buildrel \Delta \over = {\left( {{\cal A}{_i}} \right)_{i \in {\cal M}}} $ and $ \bm{{\cal P}} \buildrel \Delta \over = {\left( {{\cal P}{_i}} \right)_{i \in {\cal M}}} $ are respectively the strategy set of MDs and OSPs, the following conditions are satisfied:
	\begin{equation}
		\label{equ16}
		U_i^{MD}(\bm{\alpha} _i^*,\bm{A} _{ - i}^*,{\bm{p}^*}) \le U_i^{MD}({\bm{\alpha} _i},\bm{A} _{ - i}^*,{\bm{p}^*})
	\end{equation}
	and
	\begin{equation}
		\label{equ17}
		U_j^{OSP}(p_j^*,\bm{p}_{ - j}^*,{\bm{A} ^*}) \ge U_i^{OSP}({p_j},\bm{p}_{ - j}^*,{\bm{A} ^*}).
	\end{equation}
\end{definition}

In our proposed distributed computation offloading and pricing mechanism, we consider both IoT MDs and OSPs to be selfish, which means that IoT MDs compete with each other for OSPs' computing resources and each OSP adjust its price to improve its revenue in a fully non-cooperated manner. Therefore, the Stackelberg game is composed of a follower non-cooperative game and a leader non-cooperative game. If both the non-cooperative game of followers and leaders are sequentially proven to achieve the Nash equilibrium, an SE solution surely exists in our proposed Stackelberg game.

\subsection{Follower Non-Cooperative Game Analysis}
As followers, given the prices of all OSPs, each IoT MD competes with each other to minimize its own disutility. Formally, we define the follower non-cooperative game as ${{\cal G}^{follower}}{\rm{ = }}\left( {{\cal M},{\left\{ {{{\cal A}_i}} \right\}_{i \in {\cal M}}},{{\left\{ {{U_i^{MD}}} \right\}}_{i \in {\cal M}}}} \right)$, where ${\cal M}$ is the set of players. Besides, the payoff profile of ${{\cal G}^{follower}}$ consists of the disutility functions of all IoT MDs as shown by ${{{\left\{ {{U_i^{MD}}} \right\}}_{i \in {\cal M}}}}$. The Nash equilibrium (NE) of $ {\cal G}^{follower} $ is defined as follows.

\begin{definition}
	\label{def2}
	(Nash equilibrium (NE) of $ {\cal G}^{follower} $) From (\ref{equ14}), the Nash equilibrium of game ${{\cal G}^{follower}}$ defined above is a feasible offloading strategy profile ${\bm{A}^ * } \buildrel \Delta \over = {\left( {\bm{\alpha} _i^ * } \right)_{i \in {\cal M}}}$, such that each ${\bm{\alpha} _i^ * }$ belongs to ${{{\cal A}_i}}$ and satifies the condition as follows:
	\begin{equation}
		\label{equ18}
		\bm{\alpha} _i^ *  \in \arg \mathop {\min }\limits_{{\bm{\alpha} _i}} {U_i^{MD}}\left( {{\bm{\alpha} _i},\bm{A} _{ - i}^ *} \right),
	\end{equation}
	which is subject to $ C1 $-$ C8 $.
\end{definition}

\begin{theorem}
	\label{thm1}
	For $\forall i \in {\cal M}$, the set ${{\cal A}_i}$ is closed and convex and the function ${U_i^{MD}}\left( {\bm{{\alpha}}{ _i},\bm{{A}}{_{ - i}}} \right)$ is continuously differentiable on $ \bm{{\cal A}} \buildrel \Delta \over = {\left( {{\cal A}{_i}} \right)_{i \in {\cal M}}} $ and convex in ${\bm{{\alpha}}{ _i}}$ for every fixed $\bm{{A}}{_{ - i}} \in {\bm{{\cal A}}_{ - i}}$.
	\begin{proof}
		\label{prf1}
		Accoding to \cite{boyd2004convex}, a function is convex if and only if its Hessian matrix is positive semidefinite. Thus, to prove the convexity of the objective function ${U_i^{MD}}\left( {\bm{{\alpha}}{ _i},\bm{{A}}{_{ - i}}} \right)$, it suffices to show that for every fixed $\bm{{A}}{_{ - i}} \in {\bm{{\cal A}}_{ - i}}$,
		\begin{equation}
			\label{equ19}
			\nabla _{\bm{{\alpha}}{ _i}}^2{U_i^{MD}} \succeq 0.
		\end{equation}
		From (\ref{equ12}), (\ref{equ19}) can be converted to:
		\begin{equation}
			\label{equ20}
			\theta _i ^ {\rm D}\frac{{\nabla _{\bm{{\alpha}}{ _i}}^2{D_i}}}{{{D_i^{MAX}}}} + \theta _i ^ {\rm E}\frac{{\nabla _{\bm{{\alpha}}{ _i}}^2{E_i}}}{{{E_i^{MAX}}}} + \theta _i ^ {\rm P}\frac{{\nabla _{\bm{{\alpha}}{ _i}}^2{P_i}}}{{{P_i^{MAX}}}} \succeq 0.
		\end{equation}
		
		The (x,y)-th elements of ${\nabla _{{\bm{\alpha} _i}}^2{D_i}}$, ${\nabla _{{\bm{\alpha} _i}}^2E}$ and ${\nabla _{{\bm{\alpha} _i}}^2P}$ are calculated as follows:
		\begin{equation}
			\label{equ21}
			\begin{aligned}
				{\left( {\nabla _{{\alpha _i}}^{\rm{2}}{D_i}} \right)_{xy}} 
				& = \frac{{{\partial ^2}{D_i}}}{{\partial {\alpha _{i,x}}{\alpha _{i,y}}}}\\
				& 
				 = 
				 \left\{ {\begin{array}{*{20}{c}}
				 		{\begin{array}{*{20}{c}}
				 				{{\Gamma _i} + {\Upsilon _i} + {\Psi _{ix}},}&{x = y,{N_c} + 1 \le x \le N}
				 		\end{array}}\\
				 		{\begin{array}{*{20}{c}}
				 				{{\Gamma _i} + {\Upsilon _i},}&{otherwise},
				 		\end{array}}
				 \end{array}} \right.
			\end{aligned}
		\end{equation}
		\begin{equation}
			\label{equ22}
			{\left( {\nabla _{{\alpha _i}}^{\rm{2}}{E_i}} \right)_{xy}} = \frac{{{\partial ^2}{E_i}}}{{\partial {\alpha _{i,x}}{\alpha _{i,y}}}} = \varepsilon _i^{local}{\Gamma _i} + \varepsilon _i^{tx}{\Upsilon _i},
		\end{equation}
		\begin{equation}
			\label{equ23}
			{\left( {\nabla _{{\alpha _i}}^2{P_i}} \right)_{xy}} = \frac{{{\partial ^2}{P_i}}}{{\partial {\alpha _{i,x}}\partial {\alpha _{i,y}}}} = 0,
		\end{equation}
		where,
		\begin{equation}
			\label{equ24}
			\begin{aligned}
				{\Gamma _i}{\rm{ = }} & \frac{{{\rm{2}}{\lambda _i}c_i^2}}{{{{\left[ {f_i^{MD} - \left( {1 - \sum\nolimits_{j \in {\cal N}} {{\alpha _{i,j}}} } \right){\lambda _i}{c_i}} \right]}^2}}}\\
				&  + \left( {1 - \sum\nolimits_{j \in {\cal N}} {{\alpha _{i,j}}} } \right)\frac{{2\lambda _i^2c_i^3}}{{{{\left[ {f_i^{MD} - \left( {1 - \sum\nolimits_{j \in {\cal N}} {{\alpha _{i,j}}} } \right){\lambda _i}{c_i}} \right]}^3}}},
			\end{aligned}
		\end{equation}
		\begin{equation}
			\label{equ25}
			\begin{aligned}
				{\Upsilon _i} = & \frac{{{\lambda _i}{\overline {S_i^2} }}}{{{{\left( {1 - \frac{{{\lambda _i}{z_i}\sum\nolimits_{j \in {\cal N}} {{\alpha _{i,j}}} }}{{{r_i}}}} \right)}^2}}}\\
				&  + \sum\nolimits_{j \in {\cal N}} {{\alpha _{i,j}}\frac{{{z_i}{{\overline {S_i^2} }}\lambda _i^2}}{{{r_i}{{\left( {1 - \frac{{{\lambda _i}{z_i}\sum\nolimits_{j \in {\cal N}} {{\alpha _{i,j}}} }}{{{r_i}}}} \right)}^3}}}},
			\end{aligned}
		\end{equation}
		\begin{equation}
			\label{equ26}
			\begin{aligned}
				{\Psi _{ix}} = & \frac{{2{\lambda _i}c_i^2}}{{{{\left( {f_x^{OSP} - \sum\nolimits_{k \in {\cal M}} {{\alpha _{k,x}}} {\lambda _k}{c_k}} \right)}^2}}}\\
				& + {\alpha _{i,x}}\frac{{2\lambda _i^2c_i^3}}{{{{\left( {f_x^{OSP} - \sum\nolimits_{k \in {\cal M}} {{\alpha _{k,x}}{\lambda _k}{c_k}} } \right)}^3}}}.	
			\end{aligned}	
		\end{equation}
		
		From constraints $ C1 $-$ C5 $, it is not hard to deduce that $ {\Gamma _i} > 0 $, $ {\Upsilon _i} > 0 $ and $ {\Psi _{ix}} > 0 $, for $ \forall i \in {\cal M} $ and $ \forall x \in {\cal N} $.
		
		After calculation, we find that $ \nabla _{\bm{\alpha} _i}^2{D_i} $, $ \nabla _{\bm{\alpha} _i}^2{E_i} $ and $ \nabla _{\bm{\alpha} _i}^2{P_i} $ are all positive (semi)definite matrices. Therefore, we can easily conclude that (\ref{equ17}) is valid. 
		
		Besides, we can easily find that the set ${{\cal A}_i}$ is closed for $\forall i \in {\cal M}$. To prove the convexity of the set ${{\cal A}_i}$, we need to confirm that the nonlinear constraints of objective function are convex. The convexity of nonlinear constraints can also be proved in the same manner as above.  
	\end{proof}
\end{theorem}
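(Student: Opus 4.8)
The plan is to verify four things in turn --- that $U_i^{MD}$ is $C^{1}$ on $\bm{\mathcal{A}}$, that it is convex in $\bm{\alpha}_i$, that $\mathcal{A}_i$ is convex, and that $\mathcal{A}_i$ is closed --- handling smoothness and objective convexity first since the set-convexity argument reuses them. For smoothness, I would note that through the weighted sum (\ref{equ12}) the objective is a finite combination of (i) affine functions of $\bm{\alpha}_i$ (the wired-transmission delay (\ref{equ6}), the cloud branch of (\ref{equ7}), and the payment (\ref{equ11})) and (ii) rational functions whose denominators --- $f_i^{MD}-(1-\sum_{j}\alpha_{i,j})\lambda_i c_i$, $1-\lambda_i z_i\sum_{j}\alpha_{i,j}/r_i$, and $f_j^{OSP}-\sum_{k}\alpha_{k,j}\lambda_k c_k$ --- are held strictly positive on $\bm{\mathcal{A}}$ by the stability constraints $C3$, $C4$, $C5$. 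A ratio of polynomials with nonvanishing denominator is $C^{\infty}$, so $U_i^{MD}$ is continuously differentiable on $\bm{\mathcal{A}}$, which also legitimizes the second-order reasoning below.

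For convexity of $U_i^{MD}$ in $\bm{\alpha}_i$ I would use the Hessian test: since $\theta_i^{\rm D},\theta_i^{\rm E},\theta_i^{\rm P}$ and $D_i^{MAX},E_i^{MAX},P_i^{MAX}$ are all positive, it suffices that $\nabla^2_{\bm{\alpha}_i}D_i$, $\nabla^2_{\bm{\alpha}_i}E_i$, $\nabla^2_{\bm{\alpha}_i}P_i$ are each positive semidefinite, because a nonnegative combination of PSD matrices is PSD. Here $P_i$ is affine, so $\nabla^2_{\bm{\alpha}_i}P_i=0$. For $D_i$ and $E_i$ I would compute the second partials explicitly, obtaining the entrywise expressions (\ref{equ21})--(\ref{equ22}) with the scalars $\Gamma_i,\Upsilon_i,\Psi_{ix}$ of (\ref{equ24})--(\ref{equ26}); the key structural point is that each Hessian decomposes as $c\,\bm{1}\bm{1}^{T}+\mathrm{diag}(\bm{d})$ with $c\ge 0$ and $\bm{d}\ge\bm{0}$ --- for $E_i$ the diagonal part vanishes and $c=\varepsilon_i^{local}\Gamma_i+\varepsilon_i^{tx}\Upsilon_i$, while for $D_i$ one has $c=\Gamma_i+\Upsilon_i$ and $d_x=\Psi_{ix}$ on edge OSPs, $0$ on cloud OSPs. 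Since $\bm{1}\bm{1}^{T}\succeq 0$ and any nonnegative diagonal matrix is $\succeq 0$, and since $C1$--$C5$ keep every queue utilization strictly inside $(0,1)$ so that $\Gamma_i,\Upsilon_i,\Psi_{ix}>0$, both Hessians are PSD, hence $\nabla^2_{\bm{\alpha}_i}U_i^{MD}\succeq 0$ and $U_i^{MD}$ is convex in $\bm{\alpha}_i$.

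For the feasible set, I would check that each constraint of (\ref{equ14}) defines a convex set in $\bm{\alpha}_i$: $C1$, $C2$ are linear; $C3$, $C4$, $C5$ are affine inequalities (the loads $1-\sum_j\alpha_{i,j}$, $\sum_j\alpha_{i,j}$ and $\sum_k\alpha_{k,j}\lambda_k c_k$ are affine in $\bm{\alpha}_i$, the last with $\{\bm{\alpha}_k\}_{k\ne i}$ held fixed); $C6$ and $C7$ are the sublevel sets $\{D_i\le D_i^{MAX}\}$ and $\{E_i\le E_i^{MAX}\}$ of the convex functions just analyzed; and $C8$ is the sublevel set of the affine $P_i$. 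An intersection of convex sets is convex, so $\mathcal{A}_i$ is convex. Closedness is immediate for $C1$, $C2$, $C6$, $C7$, $C8$; the only delicate point is the strict stability inequalities $C3$--$C5$, which I would absorb by observing that a finite delay or energy budget ($C6$, $C7$) forces each queue utilization to stay bounded away from $1$ --- otherwise the corresponding rational term blows up and violates $C6$ or $C7$ --- so on the effective domain $C3$--$C5$ hold with a strictly positive margin and $\mathcal{A}_i$ is closed.

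I expect the main obstacle to be the explicit Hessian computation for $D_i$ and, in particular, recognizing the $c\,\bm{1}\bm{1}^{T}+\mathrm{diag}(\bm{d})$ shape: one must see that the $M/M/1$ edge-offload term contributes only to the diagonal entry of its own edge OSP while the cloud $M/M/\infty$ term contributes nothing, so that the off-diagonal coupling (through $1-\sum_j\alpha_{i,j}$ and $\sum_j\alpha_{i,j}$) and the diagonal pieces align into a manifestly PSD form. The closedness argument, though minor, still needs the remark that $C6$ and $C7$ effectively subsume the strict stability constraints.
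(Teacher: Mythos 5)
Your proposal is correct and follows essentially the same route as the paper: the Hessian test applied to the weighted sum in (\ref{equ12}), with the same entrywise second derivatives (\ref{equ21})--(\ref{equ26}) and positivity of $\Gamma_i,\Upsilon_i,\Psi_{ix}$ from $C1$--$C5$. In fact you supply the two steps the paper only asserts: the explicit $c\,\bm{1}\bm{1}^{T}+\mathrm{diag}(\bm{d})$ decomposition that makes positive semidefiniteness manifest (where the paper writes ``after calculation, we find''), and the observation that the strict stability constraints $C3$--$C5$ are effectively subsumed by the budget constraints $C6$--$C7$, without which closedness of ${\cal A}_i$ would not actually hold.
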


\begin{theorem}
	\label{thm2}
	The game $ {\cal G}^{follower} $ is equivalent to the variational inequality problem $ VI\left( \bm{{\cal A}},\bm{F} \right) $, where
	\begin{equation}
		\label{equ27}
		{\bm{F}(\bm{{\alpha}}{ _i},\bm{{A}}{_{ - i}})} = {\left( {{F_i}(\bm{{\alpha}}{ _i},\bm{{A}}{_{ - i}})} \right)_{i \in {\cal M}}},
	\end{equation}
	with
	\begin{equation}
		\label{equ28}
		{F_i}(\bm{{\alpha}}{ _i},\bm{{A}}{_{ - i}}) = {\nabla _{\bm{{\alpha}}{ _i}}}{U_i^{MD}}\left( {\bm{{\alpha}}{ _i},\bm{{A}}{_{ - i}}} \right).
	\end{equation}
	\begin{proof}
		\label{prf2}
		According to Proposition 2 in \cite{scutari2014real}, the game $ {\cal G}^{followr} $ can be equivalent to the variational inequality problem $VI\left( \bm{{\cal A}},\bm{F} \right)$, supposing that for each player IoT MD $i$, the following conditions hold:
		\begin{enumerate}
			\item \emph{the (nonempty) strategy set ${\cal A}_i$ is closed and convex;}
			\item \emph{the payoff function ${U_i^{MD}}\left( {\bm{{\alpha}}{ _i},\bm{{A}}{_{ - i}}} \right)$ is convex and continuously differentiable in ${\bm{{\alpha}}{ _i}}$ for every fixed $\bm{{A}}{_{ - i}}$ .}
		\end{enumerate}
		From Theorem 2, we can find that the two conditions are both satisfied by our formulated follower game ${\cal G}^{follower}$. Thus, the result follows.
	\end{proof}
\end{theorem}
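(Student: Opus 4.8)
The plan is to reduce the claim to the standard equivalence between the Nash equilibria of a \emph{convex} non-cooperative game and the solutions of a variational inequality, i.e.\ to Proposition~2 of \cite{scutari2014real}. That result says: if in a non-cooperative game each player's strategy set is nonempty, closed and convex, and each player's cost is continuously differentiable and convex in that player's own variable for every fixed profile of the rivals, then the set of Nash equilibria coincides with the solution set of $VI(\bm{{\cal A}},\bm{F})$, where $\bm{{\cal A}}=\prod_{i\in{\cal M}}{\cal A}_i$ and $\bm{F}=\left(\nabla_{\bm{\alpha}_i}U_i^{MD}\right)_{i\in{\cal M}}$. So the whole proof amounts to verifying the two hypotheses for ${\cal G}^{follower}$.

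First I would observe that both hypotheses are supplied verbatim by Theorem~\ref{thm1}: there it is shown that for every $i\in{\cal M}$ the set ${\cal A}_i$ is closed and convex, and that $U_i^{MD}(\bm{\alpha}_i,\bm{A}_{-i})$ is continuously differentiable on $\bm{{\cal A}}$ and convex in $\bm{\alpha}_i$ for each fixed $\bm{A}_{-i}$. Hence nothing new needs to be computed; it suffices to cite Theorem~\ref{thm1} and invoke the proposition.

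For completeness I would also include the short self-contained argument for why the equivalence holds, since it motivates the choice of $\bm{F}$. Because $\bm{{\cal A}}$ is a Cartesian product, a profile $\bm{A}^*=(\bm{\alpha}_i^*)_{i\in{\cal M}}$ solves $VI(\bm{{\cal A}},\bm{F})$ iff, for every $i\in{\cal M}$,
\[
\left\langle \nabla_{\bm{\alpha}_i}U_i^{MD}(\bm{\alpha}_i^*,\bm{A}_{-i}^*),\,\bm{\alpha}_i-\bm{\alpha}_i^*\right\rangle \ge 0 \qquad \forall\,\bm{\alpha}_i\in{\cal A}_i .
\]
For fixed $\bm{A}_{-i}^*$ this is exactly the first-order optimality condition of the program $\min_{\bm{\alpha}_i\in{\cal A}_i}U_i^{MD}(\bm{\alpha}_i,\bm{A}_{-i}^*)$ over a convex set; convexity of $U_i^{MD}$ in $\bm{\alpha}_i$ makes this condition sufficient as well as necessary, so it characterizes $\bm{\alpha}_i^*\in\arg\min_{\bm{\alpha}_i\in{\cal A}_i}U_i^{MD}(\bm{\alpha}_i,\bm{A}_{-i}^*)$, which is precisely Definition~\ref{def2}. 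Concatenating over $i\in{\cal M}$ gives the equivalence.

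The step I expect to need the most care is the implicit dependence of the feasible set ${\cal A}_i$ on the rivals through constraints $C5$ and $C6$ (both involve $\sum_{k\in{\cal M}}\alpha_{k,j}\lambda_k c_k$), which formally turns ${\cal G}^{follower}$ into a generalized Nash game rather than the fixed-strategy-set game to which Proposition~2 of \cite{scutari2014real} applies directly. To stay inside the plain VI framework I would either restrict attention to the variational (normalized) equilibria associated with the common coupled region, or argue that the tolerances $D_i^{MAX}$, $E_i^{MAX}$, $P_i^{MAX}$ together with the server-stability margins are large enough that $C5$--$C8$ are inactive at the operating point, so that effectively ${\cal A}_i=\{\bm{\alpha}_i : C1, C2, C3, C4\}$ depends on $\bm{\alpha}_i$ alone and the product structure $\bm{{\cal A}}=\prod_{i\in{\cal M}}{\cal A}_i$ is genuinely recovered.
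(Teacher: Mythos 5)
Your core argument is the same as the paper's: invoke Proposition~2 of \cite{scutari2014real} and verify its two hypotheses (closed convex strategy sets; convex, continuously differentiable payoffs in own variables) by citing Theorem~\ref{thm1}. The self-contained derivation via the first-order optimality condition on a Cartesian product is a correct and welcome elaboration of what the cited proposition does, but it is not a different route.

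What distinguishes your write-up is the final paragraph, and there you have put your finger on a real issue that the paper's own proof silently passes over: constraints $C5$ and $C6$ couple IoT MD $i$'s feasible set to $\bm{A}_{-i}$ (through $\sum_{k\in{\cal M}}\alpha_{k,j}\lambda_k c_k$ and through $D_{i,j}^{offload}$), so ${\cal G}^{follower}$ is strictly speaking a \emph{generalized} Nash game, and $\bm{{\cal A}}$ is not a genuine Cartesian product of player-wise sets. Proposition~2 of \cite{scutari2014real} as stated applies to decoupled strategy sets; for shared coupled constraints the VI $VI(\bm{{\cal A}},\bm{F})$ characterizes only the variational (normalized) equilibria rather than all Nash equilibria. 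Either of your proposed fixes --- restricting attention to variational equilibria, or arguing that the coupled constraints are inactive at the operating point so that the effective feasible sets reduce to $\{\bm{\alpha}_i: C1\text{--}C4\}$ --- would close this gap; the paper does neither, so on this point your proof is more careful than the one it is being compared against.
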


The reason why we carry out the well-developed VI theory to analyze follower non-cooperative game is that the properties of $ VI\left( {\bm{{\cal A}},\bm{F}} \right) $ can reflect the existence/uniqueness of $ {\cal G}^{follower} $'s Nash equilibrium. Next, Theorem \ref{thm3} is proposed to prove that the $ {\cal G}^{follower} $ has such Nash equilibrium as Definition \ref{def2}.

\begin{theorem}
	\label{thm3}
	The mapping $\bm{F}$ is a monotone function on $\bm{{\cal A}}$, and the game $ {\cal G}^{follower} $ has a convex Nash equilibrium solution set.
	\begin{proof}
		\label{prf3}
		To begin with, we will classify the mapping $\bm{F}$ as a monotone function on $\bm{{\cal A}}$. According to Definition 40 in \cite{scutari2014real}, $\bm{F}$ is monotone on $\bm{{\cal A}}$ if for all pairs $\bm{x}$ and $\bm{y}$ in $\bm{{\cal A}}$,
		\begin{equation}
			\label{equ29}
			{\left( {\bm{x} - \bm{y}} \right)^T}\left( {F(\bm{x}) - F(\bm{y})} \right) \ge 0
		\end{equation}
		that is
		\begin{equation}
			\label{equ30}
			\sum\limits_{i = 1}^M {{{\left( {\bm{{x}}{_i} - \bm{{y}}{_i}} \right)}^T}\left( {{F_i}(\bm{x}) - {F_i}(\bm{y})} \right)}  \ge 0.
		\end{equation}
		
		According to \cite{liu2018game}, if for $\forall i \in {\cal M}$, ${F_i}$ meets the condition as follows:
		\begin{equation}
			\label{equ31}
			\sum\limits_{i = 1}^M {{{\left( {\bm{{x}}{_i} - \bm{{y}}{_i}} \right)}^T}\left( {{F_i}(\bm{x}) - {F_i}(\bm{y})} \right)}  \ge 0,
		\end{equation}
		(\ref{equ28}) holds.
		From Theorem 2, we can see that the Jacobian matrix ${J_{{\bm{\alpha} _i}}}{F_i} = \nabla _{{\bm{\alpha} _i}}^2U_i^{MD}$ is positive definite and condition (\ref{equ29}) holds. Therefore, we can conclude that  $\bm{F}$ is monotone on $\bm{{\cal A}}$. Next, as described in Theorem 3 of \cite{scutari2014real}, based on the monntonity of $\bm{F}$ on $\bm{{\cal A}}$, the game ${\cal G} ^ {follower}$ has a convex Nash equilibrium solution set.
	\end{proof}
\end{theorem}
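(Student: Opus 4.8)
The plan is to prove the statement in two stages: first show that the pseudo‑gradient map $\bm{F}$ defined in (\ref{equ27})--(\ref{equ28}) is monotone on $\bm{{\cal A}}$, and then combine the classical variational‑inequality existence/structure result with the equivalence of Theorem~\ref{thm2} to conclude that $\mathcal{G}^{follower}$ has a nonempty convex Nash‑equilibrium solution set.

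For the monotonicity step I would start from the definition, condition (\ref{equ29})--(\ref{equ30}), and reduce it to a positivity property of the Jacobian $\bm{J}\bm{F}(\bm{A})$, whose block $(i,l)$ is $\nabla_{\bm{\alpha}_l}\nabla_{\bm{\alpha}_i}U_i^{MD}$. The diagonal blocks $\bm{J}_{ii}=\nabla^2_{\bm{\alpha}_i}U_i^{MD}$ are positive (semi)definite by Theorem~\ref{thm1}. The only source of cross‑MD coupling in $U_i^{MD}$ is the congestion term in $D_{i,j}^{offload}$ for edge OSPs $j$ (see (\ref{equ7})), so the off‑diagonal block $\bm{J}_{il}$ with $l\neq i$ is itself diagonal, vanishing except on the coordinates indexed by edge OSPs, and there its $(j,j)$ entry equals $\lambda_l c_l$ times a strictly positive quantity that depends only on player $i$'s parameters and on the aggregate load $\sum_{k\in{\cal M}}\alpha_{k,j}\lambda_k c_k$ — structurally the same $\Psi$-type expression appearing in (\ref{equ26}). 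I would then verify that the symmetric part $\tfrac12\big(\bm{J}\bm{F}+\bm{J}\bm{F}^{T}\big)$ is positive semidefinite on $\bm{{\cal A}}$: this is where the aggregative structure is essential, because the coupling enters only through the shared, convex per‑server delay functions, so the cross terms cannot overwhelm the diagonal contribution, and the stability constraints $C3$--$C5$ keep every denominator in (\ref{equ24})--(\ref{equ26}) bounded away from zero. This verification — controlling the inter‑MD interference transmitted through the edge‑server queues — is the main obstacle; the purely per‑player (diagonal) part is already settled by Theorem~\ref{thm1}, and I would carry out the off‑diagonal bound following the sufficient condition used in \cite{liu2018game}.

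Once $\bm{F}$ is continuous (immediate from the explicit second‑order expressions (\ref{equ21})--(\ref{equ26}) on the feasible region) and monotone on $\bm{{\cal A}}$, I would observe that $\bm{{\cal A}}=\prod_{i\in{\cal M}}{\cal A}_i$ is closed and convex by Theorem~\ref{thm1} and bounded since every $\alpha_{i,j}\in[0,1]$, hence compact. Invoking Theorem~3 of \cite{scutari2014real}, for a compact convex set $\bm{{\cal A}}$ and a continuous monotone map $\bm{F}$ the solution set $\mathrm{SOL}(\bm{{\cal A}},\bm{F})$ is nonempty, compact and convex. By Theorem~\ref{thm2}, $\mathcal{G}^{follower}$ is equivalent to $VI(\bm{{\cal A}},\bm{F})$, so its set of Nash equilibria coincides with $\mathrm{SOL}(\bm{{\cal A}},\bm{F})$ and is therefore nonempty and convex, which is exactly the assertion of the theorem.
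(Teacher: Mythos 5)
Your overall route --- establish monotonicity of the pseudo-gradient $\bm{F}$ and then invoke Theorem 3 of \cite{scutari2014real} through the equivalence of Theorem \ref{thm2} --- is the same as the paper's, and you have correctly spotted the point the paper glosses over: monotonicity of $\bm{F}$ requires positive semidefiniteness of the symmetric part of the \emph{full} Jacobian $\bm{J}\bm{F}$, including the off-diagonal blocks $\nabla_{\bm{\alpha}_l}\nabla_{\bm{\alpha}_i}U_i^{MD}$ created by the shared edge-server queues in (\ref{equ7}). The paper's proof only inspects the diagonal blocks $J_{\bm{\alpha}_i}F_i=\nabla^2_{\bm{\alpha}_i}U_i^{MD}$ (already covered by Theorem \ref{thm1}) and declares (\ref{equ29}) to hold; the intermediate ``sufficient condition'' (\ref{equ31}) it quotes is literally identical to the conclusion (\ref{equ30}), so no argument for the cross-player terms is actually supplied there.

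However, your proposal does not close this gap either: the sentence ``the cross terms cannot overwhelm the diagonal contribution'' is precisely the claim that needs proof, and it is not automatic. Restricting to coordinate $j$ of an edge OSP and writing $b_i=\theta_i^{\rm D}c_i/D_i^{MAX}$ and $m_l=\lambda_l c_l$, the leading part of the relevant Jacobian block has $(i,l)$ entry proportional to $b_i m_l$ off the diagonal and $2b_i m_i$ on it, i.e.\ it has the form $\mathrm{diag}(b_i m_i)+\bm{b}\bm{m}^{T}$. Its symmetric part contains the rank-two term $\tfrac12(\bm{b}\bm{m}^{T}+\bm{m}\bm{b}^{T})$, whose smaller eigenvalue $\tfrac12(\bm{b}^{T}\bm{m}-\|\bm{b}\|\|\bm{m}\|)$ is strictly negative whenever $\bm{b}$ and $\bm{m}$ are not parallel; with heterogeneous weights $\theta_i^{\rm D}/D_i^{MAX}$ and loads $\lambda_i c_i$ (as in this model) the diagonal term need not dominate it. This is the well-known failure of pseudo-gradient monotonicity in atomic splittable congestion games with heterogeneous players. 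So either an explicit parameter condition must be stated and verified (as you hint by deferring to \cite{liu2018game}), or the monotonicity claim must be weakened; without that computation your plan, like the paper's own proof, establishes only blockwise convexity, which by itself does not yield (\ref{equ29}).
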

\subsection{Leader Non-Cooperative Game Analysis}     
As leaders, having no idea of the utility functions of IoT MDs, OSPs can only dynamically adjust their prices according to the amount of tasks offloaded to them. The relationship among OSPs is competitive with no enforced rules and an OSP has no idea of other OSPs' pricing strategies. Therefore a leader non-cooperative game for OSPs is defined as $ {{\cal G}^{leader}}{\rm{ = }}\left( {{\cal N},{\left\{ {{{\cal P}_j}} \right\}_{j \in {\cal N}}},{{\left\{ {{U_j^{OSP}}} \right\}}_{j \in {\cal N}}}} \right) $, where $ \cal N $ denotes the set of players and the strategy set of OSP $ j $ is all prices above cost, i.e.,
$ {\cal P}_j = \left\lbrace p_j | p_j \ge p_j^{min}\right\rbrace $. The Nash equilibrium (NE) of $ {\cal G} ^ {leader} $ is defined as follows.
\begin{definition}
	\label{def3}
	(Nash equilibrium (NE) of $ {\cal G} ^ {leader} $) From (\ref{equ15}), the Nash equilibrium of game ${{\cal G}^{leader}}$ defined above is a feasible pricing strategy profile ${\bm{p}^ * } \buildrel \Delta \over = {\left( {p _j^ * } \right)_{j \in {\cal N}}}$, such that each $ p _j ^ * $ belongs to $ {\cal P}_j $ and satifies the conditions as follows:
	\begin{equation}
		\label{equ32}
		\begin{aligned}
			p _j ^ *  \in & \arg \mathop {\max }\limits_{ p_j} {U_j^{OSP}}\left( { p_j,\bm{p} _{ - j}^ *} \right),\\
			s.t. \ & p_j \ge p_j^{min}.
		\end{aligned}		
	\end{equation}
\end{definition}
Here, we consider that the offloading strategies of all IoT MDs are only determined by OSPs' prices. Therefore, the utility of the focused OSP will be limited only by its own price if the prices of other OSPs are given. Based on this, the following theorem about the existence of NE of $ {\cal G} ^ {leader} $ is analyzed.

\begin{theorem}
	\label{thm4}
	Toward the leader non-cooperative game $ {\cal G} ^ {leader} $, there always exists at least one NE, if for $\forall i \in \cal M$, the following condition holds:
	\begin{equation}
		\label{equ33}
		2{\Pi _i}\left[ {\frac{{c_i^3}}{{{{\left( {f_i^{MD} - {\lambda _i}{c_i}} \right)}^3}}} + \frac{{{\lambda _i}c_i^4}}{{{{\left( {f_i^{MD} - {\lambda _i}{c_i}} \right)}^5}}}} \right] \le {\Theta _i}\frac{{\overline {{s_i}} {z_i}}}{{{r_i}}},
	\end{equation}
	where
	\begin{equation}
		\label{equ34}
		{\Pi _i} = \frac{{P_i^{MAX}}}{{\theta _i^{\rm P}{\lambda _i}{c_i}}}\left( {\frac{{\theta _i^{\rm D}}}{{D_i^{MAX}}} + \frac{{\theta _i^{\rm E}\varepsilon _i^{local}}}{{E_i^{MAX}}}} \right),
	\end{equation}
	\begin{equation}
		\label{equ35}
		{\Theta _i} = \frac{{P_i^{MAX}}}{{\theta _i^{\rm P}{\lambda _i}{c_i}}}\left( {\frac{{\theta _i^{\rm D}}}{{D_i^{MAX}}} + \frac{{\theta _i^{\rm E}\varepsilon _i^{tx}}}{{E_i^{MAX}}}} \right).
	\end{equation}
\end{theorem}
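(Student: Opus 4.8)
The plan is to obtain a pure-strategy Nash equilibrium of $\mathcal{G}^{leader}$ from the Debreu--Glicksberg--Fan fixed-point theorem, whose three hypotheses I would verify in turn: each OSP's price set is a nonempty compact convex subset of the real line, the payoff $U_j^{OSP}$ is jointly continuous in $\bm{p}$, and $U_j^{OSP}(\,\cdot\,,\bm{p}_{-j})$ is quasi-concave in $p_j$. Condition (33) is needed only for the last of these. \textbf{Step 1 (compactification).} Although $\mathcal{P}_j=\{p_j\ge p_j^{min}\}$ is unbounded, the payment budget $C8$ together with the local-stability constraint $C3$ forces the mass of tasks any IoT MD can and will offload to OSP $j$ to vanish once $p_j$ exceeds a finite threshold; past that threshold $U_j^{OSP}\equiv 0$, so raising the price further is never a best response, and the equilibrium set is unchanged if each $\mathcal{P}_j$ is replaced by a compact interval $[p_j^{min},p_j^{max}]$.

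\textbf{Step 2 (continuity of the induced demand).} For every fixed $\bm{p}$, Theorems 1--3 say that the follower game is a monotone $VI(\bm{\mathcal{A}},\bm{F})$ with a nonempty compact convex solution set, and its data depend continuously on $\bm{p}$; hence the follower solution correspondence is upper hemicontinuous with compact convex values. Fixing any selection $\bm{A}^{*}(\bm{p})$ (or the unique solution, if the Hessians in Theorem 1 are in fact positive definite along the offloaded directions), the aggregate demand $Q_j(\bm{p}):=\sum_{k\in\mathcal{M}}\lambda_k c_k\,\alpha^{*}_{k,j}(\bm{p})$ is continuous, so $U_j^{OSP}(p_j,\bm{p}_{-j})=p_j\,Q_j(\bm{p})$ is jointly continuous on the (now compact) strategy space.

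\textbf{Step 3 (quasi-concavity in $p_j$, the crux, where (33) enters).} Fix $\bm{p}_{-j}$ and study $g(p_j):=p_j\,Q_j(p_j)$. For an IoT MD $k$ whose offloading to OSP $j$ is interior and whose own budget is slack, the stationarity condition $\partial U_k^{MD}/\partial\alpha_{k,j}=0$ balances the marginal delay-plus-energy savings of offloading against the marginal payment, whose coefficient is $\theta_k^{\rm P}c_k\lambda_k/P_k^{MAX}$. Differentiating this identity implicitly in $p_j$ gives $\partial\alpha^{*}_{k,j}/\partial p_j<0$ (downward-sloping demand) and, after a second differentiation, an expression for $\partial^2\alpha^{*}_{k,j}/\partial p_j^2$ in which the local-computing curvature enters through derivatives of $D_k^{local}$ and $E_k^{local}$ and the transmission curvature through $D_k^{tx,wireless}$. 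Substituting into $g''(p_j)=2Q_j'(p_j)+p_j Q_j''(p_j)$ and bounding the destabilizing (positive) contribution by its worst case $\sum_{l\in\mathcal N}\alpha_{k,l}\to 0$ (which replaces the denominators $f_k^{MD}-(1-\sum_l\alpha_{k,l})\lambda_k c_k$ by $f_k^{MD}-\lambda_k c_k$), the requirement $g''\le 0$ --- hence concavity, hence quasi-concavity --- splits over the IoT MDs and becomes, device by device, exactly inequality (33): the factor $P_i^{MAX}/(\theta_i^{\rm P}\lambda_i c_i)$ is the inverse of the payment coefficient, $\Pi_i$ and $\Theta_i$ collect the delay/energy weights attached respectively to local computing and to transmission, and the bracketed powers of $c_i/(f_i^{MD}-\lambda_i c_i)$ are the worst-case third-order local-computing terms. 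Summing the per-device inequalities preserves the sign, so $U_j^{OSP}(\,\cdot\,,\bm{p}_{-j})$ is quasi-concave.

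\textbf{Step 4 (conclusion) and the hard part.} Compact convex strategy sets, jointly continuous payoffs, and own-price quasi-concavity let the Debreu--Glicksberg--Fan theorem (equivalently, Kakutani applied to the best-response correspondence, now nonempty, convex-valued and upper hemicontinuous) yield a pure-strategy NE of $\mathcal{G}^{leader}$, which combined with Theorem 3 establishes the existence of an SE. The main obstacle is Step 3: $\alpha^{*}_{k,j}$ is defined only implicitly through the follower NE and, for an edge OSP, is coupled across all IoT MDs by the congestion term $\sum_k\alpha_{k,j}\lambda_k c_k$ in $D_{k,j}^{offload}$, so the implicit differentiation is genuinely multivariate; one must also cope with the regime changes when some $\alpha_{k,j}$ reaches $0$ or $1$ or when a constraint in $C1$--$C8$ activates, and argue that those kinks do not destroy single-peakedness of $g$. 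It should be cleanest to treat cloud OSPs (where $D_{k,j}^{offload}=c_k/f_j^{OSP}$ is price-independent, so the devices decouple) separately from edge OSPs, and to push the worst-case bounding only far enough to land on the sufficient condition (33).
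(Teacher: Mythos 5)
Your proposal follows essentially the same route as the paper: the paper likewise inverts the follower's first-order condition $\partial U_i^{MD}/\partial \alpha_{i,j}=0$ into a per-device demand curve $RE_i^{-1}(p_j)$, writes each OSP's problem as maximizing $p_j\sum_i RE_i^{-1}(p_j)\lambda_i c_i$, and obtains (\ref{equ33}) by requiring $2(RE_i^{-1})' + p_j (RE_i^{-1})''\le 0$ with the worst case taken at $\bm{\alpha}_i=\bm{0}$ --- exactly your Step 3. Your Steps 1, 2 and 4 (compactification, continuity of the induced demand, and the explicit fixed-point invocation) make precise what the paper leaves implicit, and the multivariate coupling through the edge congestion term that you flag as ``the hard part'' is glossed over in the paper's own argument as well.
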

\begin{proof}
	From the analysis of follower non-cooperative game $ {\cal G} ^ {follower} $, for a focused OSP $ j $, if the prices of other OSPs are given, we can get the relationship between $ p_j $ and $ {\alpha}_{i,j} $ from the first-order partial derivative of IoT MD $ i $'s disutility functions, i.e.,
	\begin{equation}
		\label{equ36}
		p_j = {RE}_i\left({{\alpha}_{i,j}} \right) = {p_j} - \frac{{P_i^{MAX}}}{{\theta _i^P{\lambda _i}{c_i}}} \cdot \frac{{\partial U_i^{MD}}}{{\partial {\alpha _{i,j}}}},\forall i \in {\cal M}.	
	\end{equation}
	
	Therefore, for each OSP $ j $, adjusting a proper price to maximize its utility is equivalent to sovling the problem as follows:
	\begin{equation}
		\begin{aligned}
			\label{equ37}
			\mathop {\max }\limits_{{p_j}} & {p_j}\sum\nolimits_{i \in {\cal M}} {{RE_i^{ - 1}({p_j})}{\lambda _i}{c_i}},\\
			s.t. \ & {p_j} \ge {p _j ^{min}},
		\end{aligned}
	\end{equation}
	
	To prove the existence of NE, we must prove that (\ref{equ37}) is concave, in other words, we should prove that the second derivative of the objective is always not greater than 0, i.e.,
	\begin{equation}
		\label{equ38}
		\begin{aligned}
			& \sum\nolimits_{i \in {\cal M}} {{\lambda _i}{c_i}\left[ {2{{\left( {RE_i^{ - 1}} \right)}^\prime } + {p_j}{{\left( {RE_i^{ - 1}} \right)}^{\prime \prime }}} \right]} \\
			= & \sum\nolimits_{i \in {\cal M}} {{\lambda _i}{c_i}\left[ {2\frac{1}{{{{\left( {R{E_i}} \right)}^\prime }}} + {p_j}\frac{{ - {{\left( {R{E_i}} \right)}^{\prime \prime }}}}{{{{\left( {{{\left( {R{E_i}} \right)}^\prime }} \right)}^3}}}} \right]}  \le 0,
		\end{aligned}
	\end{equation}
	where $ {{{\left( {R{E_i}} \right)}^\prime }} $ and $ {{{\left( {RE_i^{ - 1}} \right)}^\prime }} $ are the first derivative functions of $ {RE}_i\left({{\alpha}_{i,j}} \right) $ and its inverse function respectively. $ {{{\left( {R{E_i}} \right)}^{\prime \prime }}} $ and $ {{{\left( {RE_i^{ - 1}} \right)}^{\prime \prime }}} $ are the second derivative functions of $ {RE}_i\left({{\alpha}_{i,j}} \right) $ and its inverse function.
	
	After calculation, we can easily find that $ {{{\left( {R{E_i}} \right)}^\prime }} $ is always less than 0. In order to ensure that for $ \forall p_j \ge p_j^{min} $, inequality (\ref{equ38}) always holds, $ {{{\left( {R{E_i}} \right)}^{\prime \prime }}} \le 0 $ should always hold. By analyzing the analytical formula of $ {{{\left( {R{E_i}} \right)}^{\prime \prime }}} $, we find that when $ \bm{\alpha}_i $ takes $ \bm{0} $, the value of $ {{{\left( {R{E_i}} \right)}^{\prime \prime }}} $ is largest. According to $ max\left\lbrace {{{\left( {R{E_i}} \right)}^{\prime \prime }}} \right\rbrace \le 0 $, the condition (\ref{equ33}) holds. Therefore, the proof of Theorem \ref{thm4} is proved.      	
\end{proof}
 
\section{Algorithm Design} \label{sec-5}
In this section, based on the game theory analysis in Sec. \ref{sec-4}, we propose a fully distributed task offloading and pricing mechanism in MEC enabled edge-cloud systems as Fig. \ref{fig:distributed_mechanism}. In the leader non-cooperative game, the OSPs must wait for the follower non-cooperative game to achieve Nash equilibrium after announcing their prices. The collection and distribution of OSPs' prices and IoT MDs' offloading strategies are all done through the BS. Some public knowledge includes the average task arrival rate and average number of CPU cycles required for each task of each IoT MD is learnt by BS as soon as the IoT MD moves into the coverage of the BS. 

This section is composed of two subsections. In the first subsection, we will deal with the first condition where the prices charged by all OSPs are confirmed, and a distributed iterative algorithm is proposed to obtain equilibrious offloading strategies for all IoT MDs. In the second subsection, we put forward an algorithm to determine OSPs' pricing strategies iteratively and to achieve approximate Stackelberg equilibrium pursuantly. Both of these iterative algorithms ensure little communication overhead thanks to the negligible size of the communication message.
   
\begin{figure}[!t]
	\centering
	\includegraphics[width=3.5in]{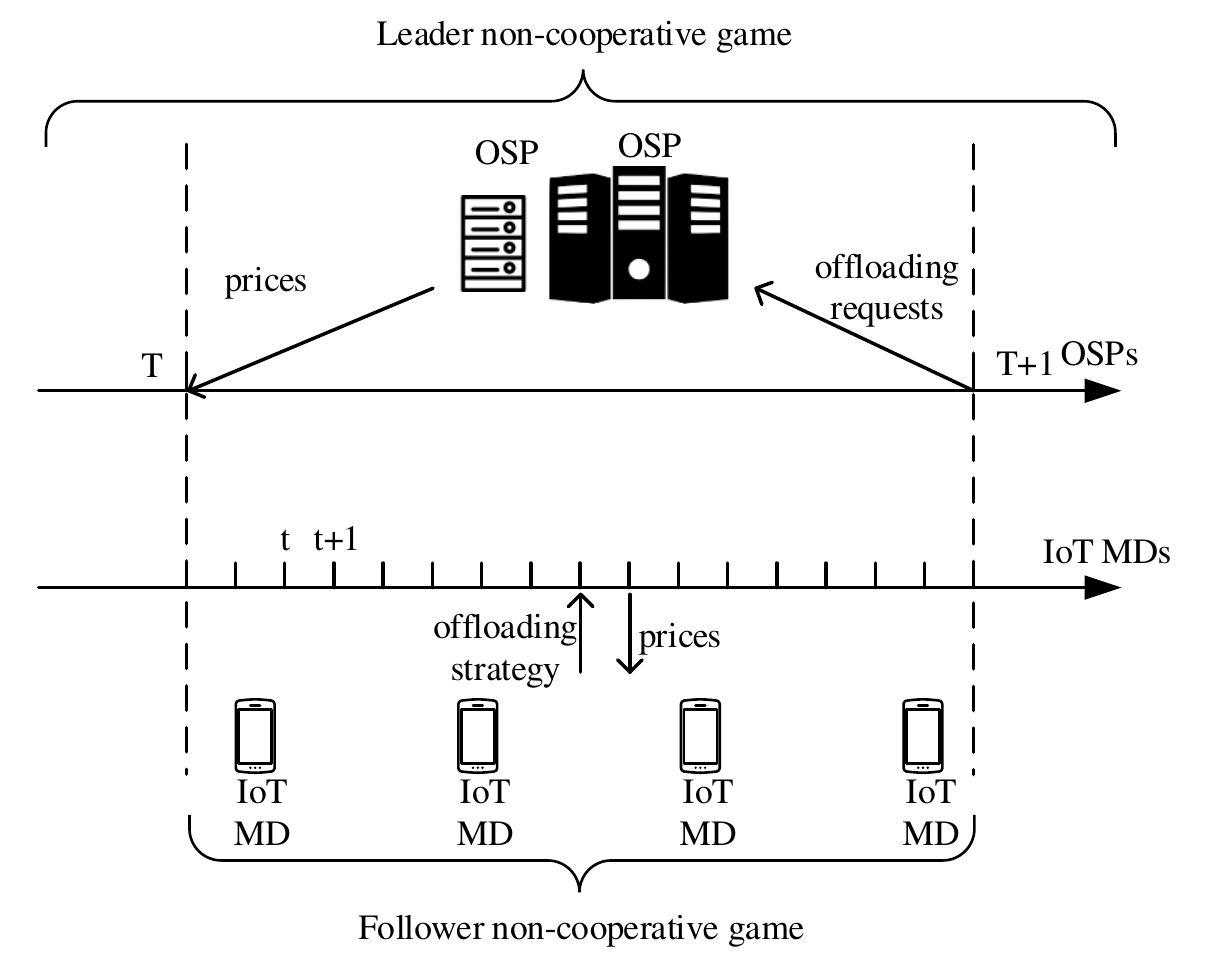}
	\DeclareGraphicsExtensions.
	\caption{Distributed task offloading and pricing mechanism}
	\label{fig:distributed_mechanism}
\end{figure}

\subsection{IPOA}
As described in \cite{scutari2014real}, there are a host of solution methods available in the literature to solve monotone real VIs and thus monotone Nash equilibrium problems, but these algorithms are centralized and unsuitable in the absence of authority. With regards to this matter, we will introduce an iterative proximal offloading algorithm (IPOA) to solve the follower non-cooperative game ${{\cal} G^{follower}}$ that has multiple Nash equilibrium solutions. The iterative algorithm allows all IoT MDs to update their strategies simultaneously in a decentralized manner.

Before beginning introducing the IPOA in detail, we consider a perturbation of game ${{\cal G}^{follower}}$ defined as ${{\cal G}_{\tau ,\bm{\beta} }^{follower}} = \left( {{\cal M},{\cal A},{{\left[ {{U_i^{MD}} + \left( {\tau /2} \right){{\left\| { \bm{\alpha}_i  - {\bm{\beta} _i}} \right\|}^2}} \right]}_{i \in {\cal M}}}} \right)$, where $\tau $ is a positive parameter and $\bm{\beta}  = {\left( {{\bm{\beta} _i}} \right)_{i \in {\cal M}}}$ with each $ {\bm{\beta} _i} \in {{R}^{N}}$. The Nash equilibrium of ${{\cal G}_{\tau ,\bm{\beta} }^{follower}}$ is each IoT MD $ i $ obtain its optimal offloading strategy $\bm{\alpha} _i^ *  \in {{\cal A}_i}$ to solve the following convex optimization problem:
\begin{equation}
	\label{equ39}
	\mathop {\min }\limits_{{\bm{\alpha} _i}} {U_i^{MD}}\left( {{\bm{\alpha} _i},{\bm{A}_{ - i}}} \right) + \frac{\tau }{2}{\left\| {{\bm{\alpha} _i} - {\bm{\beta} _i}} \right\|^2},
\end{equation}
is subject to $ C1 $-$ C8 $. When $ \tau $ is large enough, ${\bm{A}^ * }$ is a Nash equilirium solution of ${\cal G}$ if and only if ${\bm{A}^ * }$ is a Nash equilirium solution of ${{\cal G}_{\tau ,{\bm{A}^ * }}}$. The way to obtain a Nash equilirium solution of ${{\cal G}_{\tau ,{\bm{A}^ * }}}$ is described thoroughly in Algorithm \ref{alg1}.

At the beginnig, given $ \bm{p} = {\left( {p_j} \right)_{j \in {\cal N}}}$, we initialize the offloading strategies of all IoT MDs to feasible values ${\bm{A}^{\left( 0 \right)}}$, and set the number of iterations to 0. Moreover, the BS sends the service capabilities collected from all OSPs to all IoT MDs. Within iteration $ l $, each IoT MD $ i $ deals with its convex optimization problem (\ref{equ39}) with several linear and nonlinear constraints using \emph{Interior Point Method} and sends its own offloading strategy to BS. After the last IoT MD sending its newly updated strategy to the BS, the BS broadcasts ${\bm{A}^{\left( {l} \right)}}$ to all IoT MDs and informs them to update their centroid (${\bm{\beta} _i}$). A new iteration begins unless the offloading strategies of all IoT MDs are fixed. In this paper, we assume that the offloading strategies of all IoT MDs are unchanged if $\left\| {{\bm{A}^{\left( l \right)}} - {\bm{A}^{\left( {l - 1} \right)}}} \right\| \le \sigma$ where ${\bm{A}^{\left( l \right)}} = {\left( {\bm{\alpha} _i^{(l)}} \right)_{i \in {\cal M}}}$ and $\sigma$ is a relativly small constant. As we can see, this algorithm effectively reduces the number of communications between OSPs and IoT MDs and it also guarantees the performance under the premise of ensuring IoT MDs' privacy. Given a $ \tau $ large enough, the game $ {\cal G}_{\tau, \beta}^{follower} $ can converge in finite iteration times. Due to space limitations, the detailed proof of the convergence of IPOA can be found in Theorem 16 in \cite{scutari2014real}. We denote $ I_{IPOA} $ as the iteration times of the outer for loop and the computational complexity of \emph{Interior Point Method} used by each IoT MD in each iteration can be considered as $ O\left(I_{IPM}\right) $ \cite{gondzio2012interior}, where $ I_{IPM} $ is a finite integer. Therefore, the total running time of IPOA can be calculated as $ O(I_{IPOA} \cdot I_{IPM} \cdot M) $. 
\begin{algorithm}
	\renewcommand{\algorithmicrequire}{\textbf{Input:}}
	\renewcommand{\algorithmicensure}{\textbf{Output:}}
	\caption{Iterative proximal offloading algorithm (IPOA)}
	\label{alg1}
	\begin{algorithmic}[1]
		\REQUIRE $\bm{p}$, $\sigma$, ${\cal M}$, $ \left\lbrace f_j^{OSP}\right\rbrace _{j \in {\cal N}}$
		\ENSURE ${\bm{A}^ * }$
		\STATE Initialization: choose a feasible offloading strategy profile ${\bm{A}^{\left( 0 \right)}}$. Set ${\bm{\beta} _i} \leftarrow \bm{\alpha}_i^{(0)},\forall i \in {\cal M}$ and $l \leftarrow 0$. The BS sends $\bm{p}$ and $ \left\lbrace f_j^{OSP}\right\rbrace _{j \in {\cal N}}$ to all IoT MDs.
		\REPEAT
		\FOR{each MD ${i \in {\cal M}}$}
		\STATE Receives $\bm{A}_{ - i}^{(l)}$ from the BS, and uses \emph{Interior Point Method} to obtain $\bm{\alpha} _i^{(l + 1)}$ as follows:
		\STATE $\bm{\alpha} _i^{(l + 1)} \in$
		\begin{equation*}
			\arg \min \left\{ {{U_i^{MD}}\left( {{\bm{\alpha} _i},\bm{A}_{ - i}^{(l)}} \right) + \frac{\tau }{2}{{\left\| {{\bm{\alpha} _i} - {\bm{\beta} _i}} \right\|}^2}} \right\}
		\end{equation*}
		\STATE Sends the newly updated offloading strategy ${\bm{\alpha}} _i^{(l + 1)}$ to the BS.
		\ENDFOR
		\IF{the optimal strategies of all IoT MDs in the $ l $-th iteration have been achieved}
		\STATE Each IoT MD updates its centrods, i.e., ${\bm{\beta} _i} \leftarrow \bm{\alpha} _i^{(l + 1)},\forall i \in {\cal M}$
		\ENDIF
		\STATE Set $l \leftarrow l + 1$
		\UNTIL{$\left\| {{\bm{A}^{\left( {l} \right)}} - {\bm{A}^{\left( l-1 \right)}}} \right\| \le \sigma $}
	\end{algorithmic}
\end{algorithm}

\subsection{ISPA}
Similar as \cite{wang2017multi} and \cite{8478374}, backward induction is used to solve the Stackelberg game and we propose an iterative Stackelberg game pricing algorithm (ISPA). The details of ISPA is described below.

We first set a feasible value $ p_j^{\left( 0 \right)} \ge p_j^{min}$ for the initial price of each OSP $j$, $j \in {\cal N}$ and obtain the corresponding offloading strategies of all IoT MDs through IPOA, then the initial utilities of all OSPs are confirmed. In each iteration $ k $, each OSP uses the marginal utility function to adjust its price based on IoT MDs offloading strategies, i.e.,
\begin{equation}
	\label{equ40}
	\begin{aligned}
		& p_j^{(k + 1)} \\ 
		& = \max \left( {p_j^{(k)} + {\Delta _j}\left( {\frac{{\partial U_j^{OSP}\left( {p_j^{(k)},\bm{p}_{ - j}^{(k)},{\bm{A}^{(k)}}} \right)}}{{\partial p_j^{(k)}}}} \right) ,p_j^{min}} \right),
	\end{aligned}
\end{equation}
where ${\Delta _j} > 0$ represents the positive iteration value of OSP $j$ and the partial derivative of ${U_j^{OSP}}$ to $p_j$ can be calculated by a relatively small value $ \eta $ as:
\begin{equation}
	\label{equ41}
	\begin{array}{l}
		\frac{{\partial U_j^{OSP}(p_j^{(k)},p_{ - j}^{(k)},{A^{(k)}})}}{{\partial p_j^{(k)}}}\\
		\approx \frac{{U_j^{OSP}\left( {...,p_j^{(k)} + \eta ,...} \right) - U_j^{OSP}(...,p_j^{(k)} - \eta ,...)}}{{2\eta }}
	\end{array}
\end{equation}
Here, we can find that as OSP $ j $ adjusts its price to gain the partial derivatinve of $ U_j^{OSP} $ to $ p_j $, IoT MDs should re-determine their offloading strategies to the small changes in price $ p_j $ via IPOA. In each iteration $ k $, each OSP gets its updated price, thereby a new collection of OSPs' utilities ${\bm{U}^{OSP}}\left( k \right) = {\left( {U_j^{OSP}(k)} \right)_{j \in {\cal N}}}$ is obtained. As the iteration increases, the prices gradually approximate the optimal prices. Taking into account the dynamics of the mobile network, in order to quickly obtain approximate equilibrium prices of all OSPs, we set $ I_{ISPA} $ as the maximum iteration times of outer for loop. Given the computational complexity of IPOA is $ O\left(I_{IPOA} \cdot I_{IPM} \cdot M\right) $, we can get that the total running time of ISPA is $ O\left(I_{ISPA} \cdot I_{IPOA} \cdot I_{IPM} \cdot M \cdot N\right) $. The pseudocode of the proposed algorithm is presented in Algorithm \ref{alg2}.

\begin{algorithm}
	\renewcommand{\algorithmicrequire}{\textbf{Input:}}
	\renewcommand{\algorithmicensure}{\textbf{Output:}}
	\caption{Iterative Stackelberg game pricing algorithm (ISPA)}
	\label{alg2}
	\begin{algorithmic}[1]
		\REQUIRE $\rho$, ${\cal N}$, ${\cal M}$
		\ENSURE ${\bm{p}^ *}$, ${\bm{A}^ * }$
		\STATE Initialization: set a feasible value for $ \bm{p}^{(0)} $. Get the corresponding offloading strategies of all IoT MDs ${\bm{A}^{\left( 0 \right)}}$ and calculate all OSPs' initial utilities $ U^{OSP}\left(0\right) $ consequently. Set $k \leftarrow 0$.
		\REPEAT
		\FOR{each OSP ${j \in {\cal N}}$}
		\STATE Adds an $ \eta $ to its price $ p_j^{\left(k\right)} $ and sends it to the BS, then corresponding offloading strategies of all IoT MDs are obtained via IPOA.
		\STATE Reduces an $ \eta $ to its price $ p_j^{\left(k\right)} $ and sends it to the BS, then corresponding offloading strategies of all IoT MDs are obtained via IPOA.
		\STATE Updates its price through (\ref{equ40}) and (\ref{equ41}) and sends the updated price to the BS.
		\ENDFOR
		\IF{all OSPs have updated their prices}
		\STATE The BS sends $ \bm{p}^{(k+1)} $ to IoT MDs and get the corresponding offloading strategies of all IoT MDs $ \bm{A}^{(k+1)} $. Calculate $ {\bm{U}^{OSP}}\left( k+1 \right) $ using  $ \bm{p}^{(k+1)} $ and $ \bm{A}^{(k+1)} $. 
		\ENDIF
		\STATE Set $k \leftarrow k + 1$.
		\UNTIL{$ k = I_{ISPA} $}
	\end{algorithmic}
\end{algorithm}

\section{Simulation Results} \label{sec-6}
In this section, we conduct extensive simulations to validate the performance of our proposed two algorithms for MEC enabled edge-cloud systems. For experimental purposes, we build up an IoT scenario consisting of a BS covering 
a circular area with a radius of 200m. In order to ensure the authenticity and credibility of the parameters, we refer to many papers and standards \cite{ebrahimzadeh2020cooperative,liu2017multiobjective,miettinen2010energy,yang2019mobile} to design our experimental simulation parameters. The detailed parameter settings are shown in Table. \ref{table2}. In Table. \ref{table2}, $ f_i^{MD} $ and $ f_j^{OSP} $ take the value uniformly in the corresponding value intervals and other parameters are designated fixed values. For $\forall i \in {\cal M}$, the weight factors, $\theta _i^{\rm D}$, $\theta _i^{\rm E}$ and $\theta _i^{\rm P}$ are randomly retrieved values from the range $\left[ {0,1} \right]$ so that they are individually different.
\begin{table}
	\caption{Simulation Parameters}
	\label{table2}
	\centering
	\setlength{\tabcolsep}{10pt}
	\begin{tabular}{cc|cc}
		\hline
		Parameter & Value & Parameter & Value \\
		\hline
		$ \lambda _i $ & $ \left[ 20, 29\right] $  task/min & $ \varepsilon _i^{tx} $ & $ \left[0.1, 1.0 \right] $ W \\
		$ c_i $ & 300 Mcycles & $ B $ & 100 MHz \\
		$ z_i $ & 500 Kb & $ w_0 $ & $ 10^{-8} W $ \\
		$ f_i^{MD} $ & $\left[300, 450\right]$ MHz & $ h_i $ & -50 dBm\\
		$ \varepsilon _i^{local} $ & 0.5 W & $ \sigma_i^{2} $ & 0\\
		$ D_i^{MAX} $& 1 sec &$ f_j^{OSP} $ & $ \left[1.44,2.9\right] $ GHz\\
		$ E_i^{MAX} $& 1 J &$ R $ & 10 Gbps\\
		$ P_i^{MAX} $& 0.1 \$ &$ t $ & 0\\
		\hline
	\end{tabular}
\end{table}

\subsection{Task offloading strategies of IoT MDs}
In this subsection, we will investigate the convergence and effectiveness of IPOA in an MEC enabled edge-cloud system that includes one cloud computing OSP and three edge computing OSPs. Given that the price charged by each cloud computing OSP is 0.2 \$/Gcycles and that the price of each edge computing OSP is 0.1 \$/Gcycles, we first assume that there are 50 IoT MDs within the coverage of the BS and the average task arrival rate of each IoT MD is 25 task/min. In addition, .As shown in Fig. \ref{fig:a1f1}, we select 5 IoT MDs (MD 5, 15, 25, 35 and 45), to show the disutility function values of these IoT MDs versus the number of iterations. From Fig. \ref{fig:a1f1}, we can observe that in the previous iterations, the change in the disutility function value of each IoT MD fluctuates drastically and reaches a relatively stable state in about 7 iterations. In other words, all IoT MDs achieve Nash equilibrium after a limited number of iterations given the prices of all OSPs. Moreover, considering that the time spent in each iteration is far less than the task computation time, the IPOA can converge to a Nash equilibrium very quickly and the high efficiency of IPOA is thus proved. 

\begin{figure}[!t]
	\centering
	\includegraphics[width=3.5in]{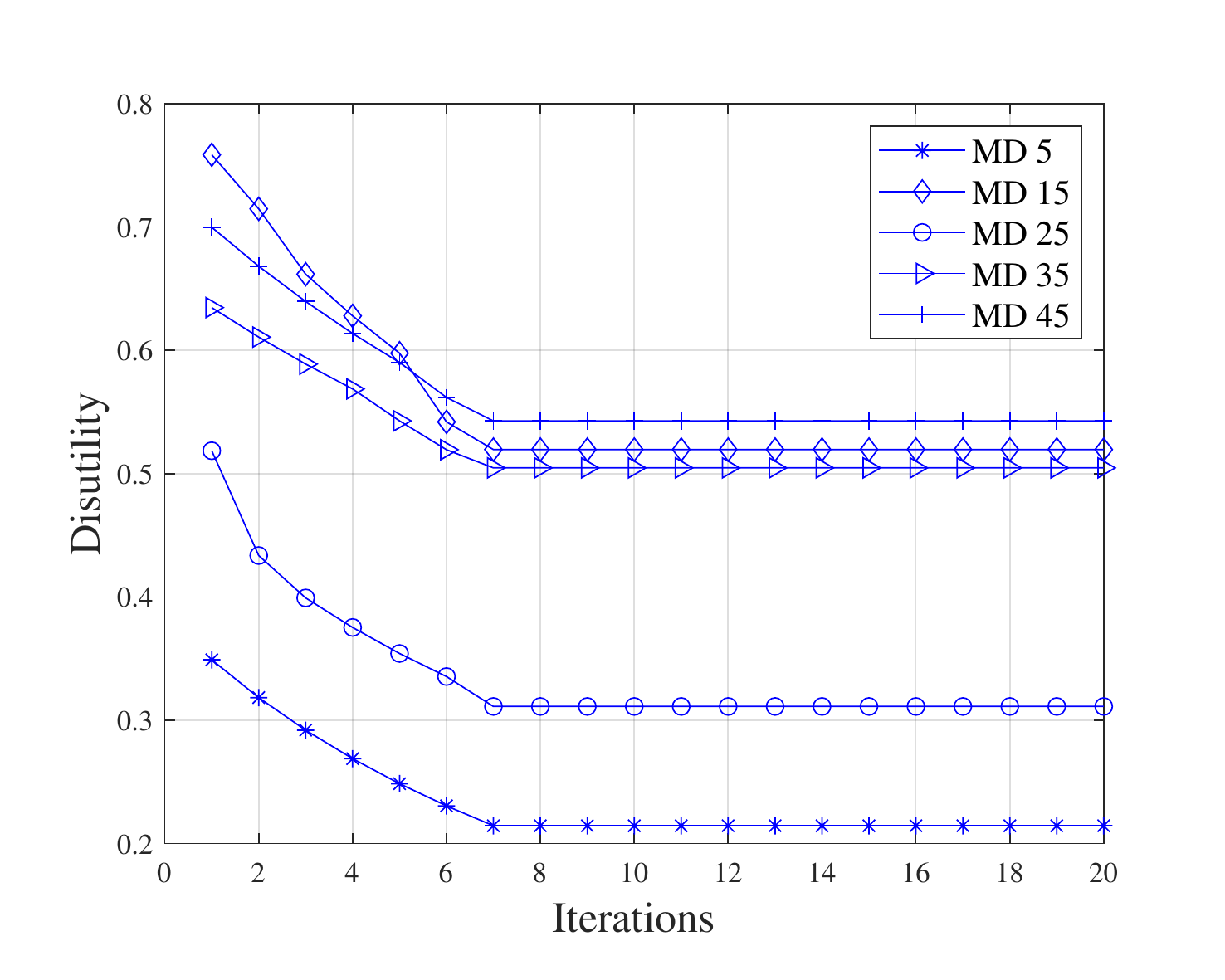}
	\DeclareGraphicsExtensions.
	\caption{The disutility versus the iterations of IPOA}
	\label{fig:a1f1}
\end{figure}

In Fig. \ref{fig:a1f2} and Fig. \ref{fig:a1f3}, we compare the average utility of all IoT MDs obtained through IPOA against the following baselines.
\begin{itemize}
	\item \emph{Local Computing}: Each IoT MD processes all of its tasks locally.
	\item \emph{Cloud Computing}: Each IoT MD offloads all of its tasks to cloud computing servers.
	\item \emph{Evenly Offloading}: Each IoT MD processes a task locally or each OSP with the same probability.
	\item \emph{Socially Optimal Offloading}: According to \cite{liu2017multiobjective}, given the prices of all OSPs, we can get the socially optimal offloading strategies of all IoT MDs via an IPM-based algorithm.    
\end{itemize}
In this paper, we define the price of anarchy (PoA), which reflects how far is the overall performance of an NE from the socially optimal offloading scheme, is the average disutility value obtained by IPOA divided by the socially optimal average disutility value, i.e.,     
\begin{equation}
	\label{poa}
	PoA = \frac{{{{\bar U}^{MD}}\left( {{\bm{A}^*}} \right) }}{{{{\bar U}^{MD}}\left( {{\bm{A}^{SO}}} \right)}},
\end{equation}
where $ {{{\bar U}^{MD}}\left( {{\bm{A}^*}} \right)} $ is IoT MDs' average disutility obtained by IPOA and $ {{{\bar U}^{MD}}({A^{SO}})} $ is the socially optimal average disutility of all IoT MDs. It is not difficult to see that the smaller the PoA, the closer the performance of IPOA is to the socially optimal performance.

As shown by Fig. \ref{fig:a1f2}, when we set the transmission power of each IoT MD to 400 mW, the IoT MDs' average disutility increases as the average task arrival rate of each IoT MD varies from 20 task/min to 29 task/min. Additionally, we can observe that the performance of IPOA is significantly superior to that of local computing and cloud computing. Although the average disutility obtained by IPOA is slightly less than evenly offloading, the system performance of IPOA evidently excels that of evenly offloading when considering the quantity base of IoT MDs. Based on (\ref{poa}), the value of PoA decreases from 43\% to 36\% as the average task arrival rate increases, which means the PoA of IPOA is bounded in the real scenes. 

Fig. \ref{fig:a1f3} demonstrates the impact of $\varepsilon_i^{tx} $ on the average disutility of all IoT MDs with the average task arrival rate of each IoT MD is set to a fixed value 25 task/min. Obviously, the average disutility caused by local computing is not altered by $
\varepsilon_i^{tx} $. On the contrary, the increase of $
\varepsilon_i^{tx} $ will bring more diutility for all IoT MDs when they offload their tasks through other offloading strategies. Fig. \ref{fig:a1f3} proves the high performance of IPOA compared with other baselines and once again verifies the NE obtained by IPOA does not bring immeasurable performance loss to the system.

\begin{figure}[!t]
	\centering
	\includegraphics[width=3.5in]{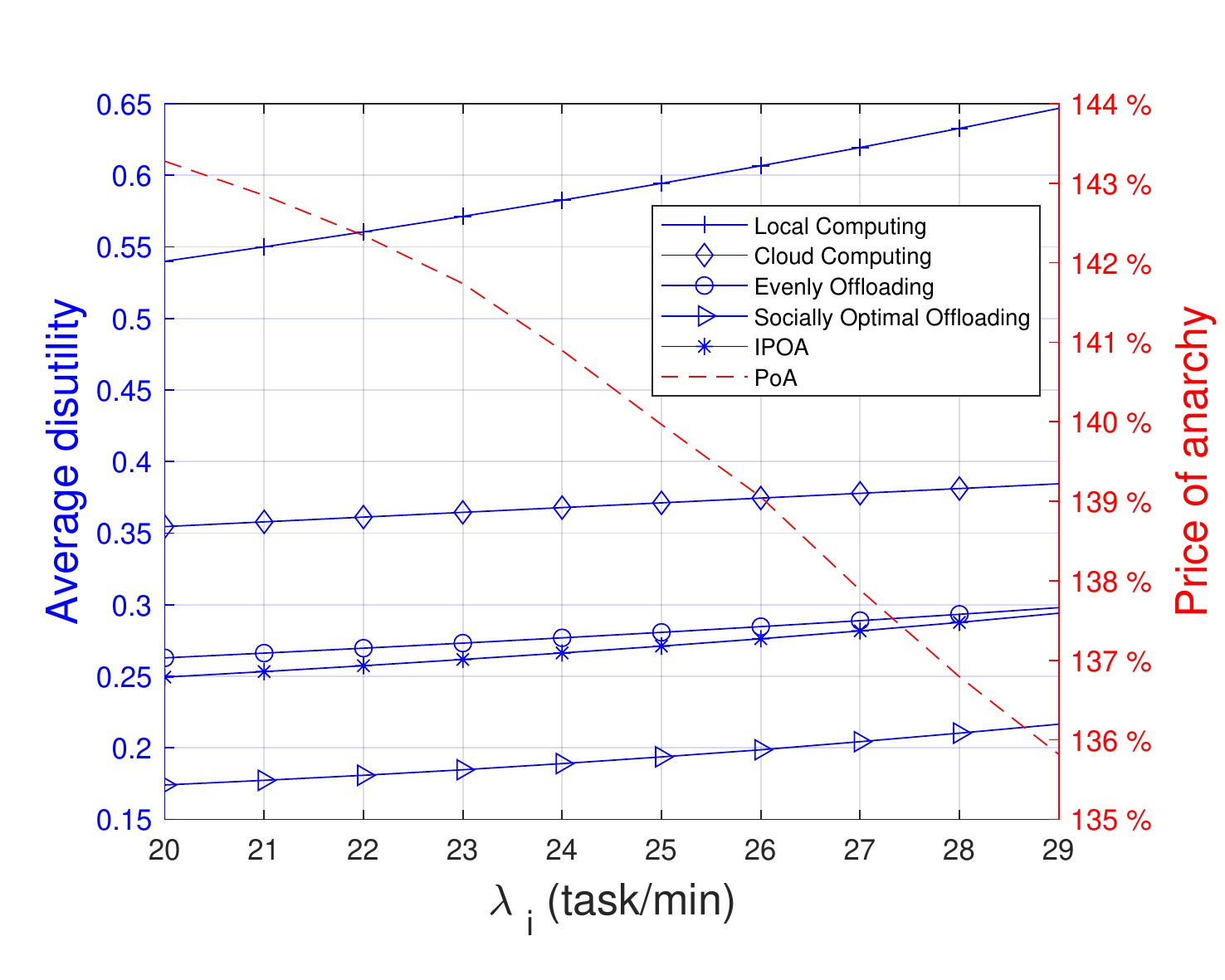}
	\DeclareGraphicsExtensions.
	\caption{The average disutility versus the average task arrival rate of IoT MDs}
	\label{fig:a1f2}
\end{figure}

\begin{figure}[!t]
	\centering
	\includegraphics[width=3.5in]{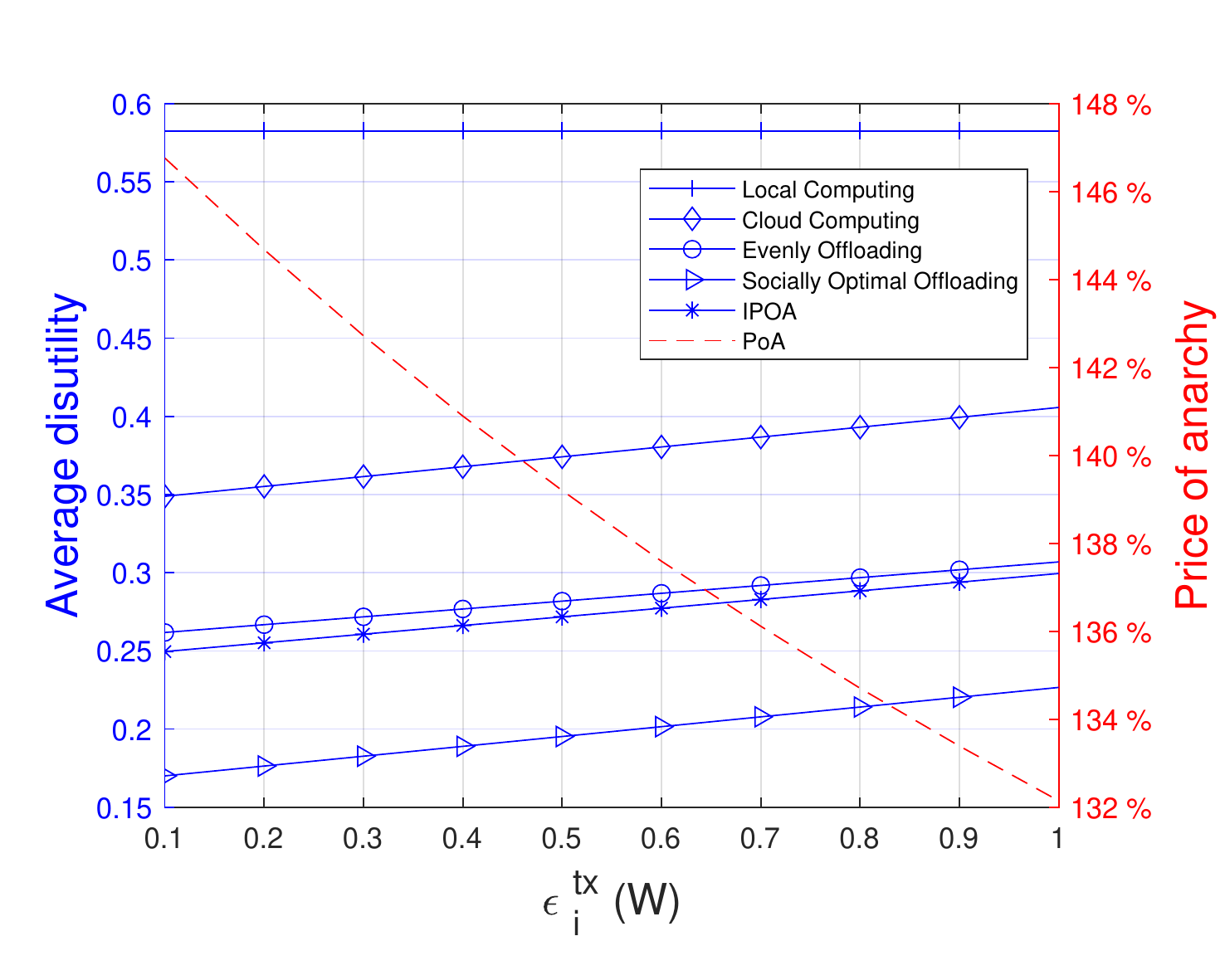}
	\DeclareGraphicsExtensions.
	\caption{The average disutility versus the transmission power of IoT MDs}
	\label{fig:a1f3}
\end{figure}

\subsection{Pricing strategies of OSPs}
In this subsection, we focus on the OSPs and study the performance of ISPA. Moreover, the influence of the IoT MD number on the performance is analyzed in the following experiment. In particular, we set the cost price of each OSP as its initial price at the beginning of ISPA.

In the case of setting $ \lambda _ i$ to 25 task/min and $ \varepsilon_i^{tx}$ to 400 mW, Fig. \ref{fig:a2f1} presents the changing prices of cloud computing OSPs and edge computing OSPs with the iterations. Specifically, the price variations corresponding to 5 different IoT MD numbers, 10, 30, 50, 70 and 90 respectively, are shown in different line types. The prices obviously show a rapidly increasing trend with the iteration increasing during the whole stage. The reason for the obvious fluctuation of the curve instead of monotonically increasing with the iterations is that we apply (\ref{equ41}) to approximate the derivative direction of each OSP utility function in each iteration. Furthermore, we can see that the prices of OSPs will increase significantly when the number of IoT MDs increases and the pricing of edge computing OSP is higher than that of cloud computing OSP as edge computing OSPs have the advantage of the cloud computing OSPs in the leader non-cooperative game.

Under the same simulation settings as in Fig. \ref{fig:a2f1}, Fig. \ref{fig:a2f2} illustrates the utility changes of the cloud computing OSPs and edge computing OSPs as the iterations progress with various IoT MD numbers. We can easily find that the utilities of all OSPs increase with each iteration. Besides, each edge computing OSP's utility is always higher than that of the cloud computing OSP. 

In Fig. \ref{fig:a2f3}, we compare ISPA with \textit{blind pricing} with 50 IoT MDs. Here, \textit{blind pricing} is a pricing scheme where each OSP increases its price without considering the impact of other OSPs' pricing. In our simulation, we assume that all blinding OSPs set their cost prices as their initial prices and linearly increase their prices to the average price of ISPA at the 50th iteration. As we can see, the average utility of OSPs in ISPA performs better than that of blinding OSPs after the 37th iteration and the performance gap between ISPA and blind pricing gradually widens. In the 50th iteration, ISPA can improve the average utility of OSPs by up to 7.5\% compared with the bliding pricing scheme at the 50th iteration and a greater improvement can be foreseen if the iteration continues.              

\begin{figure}[!t]
	\centering
	\includegraphics[width=3.5in]{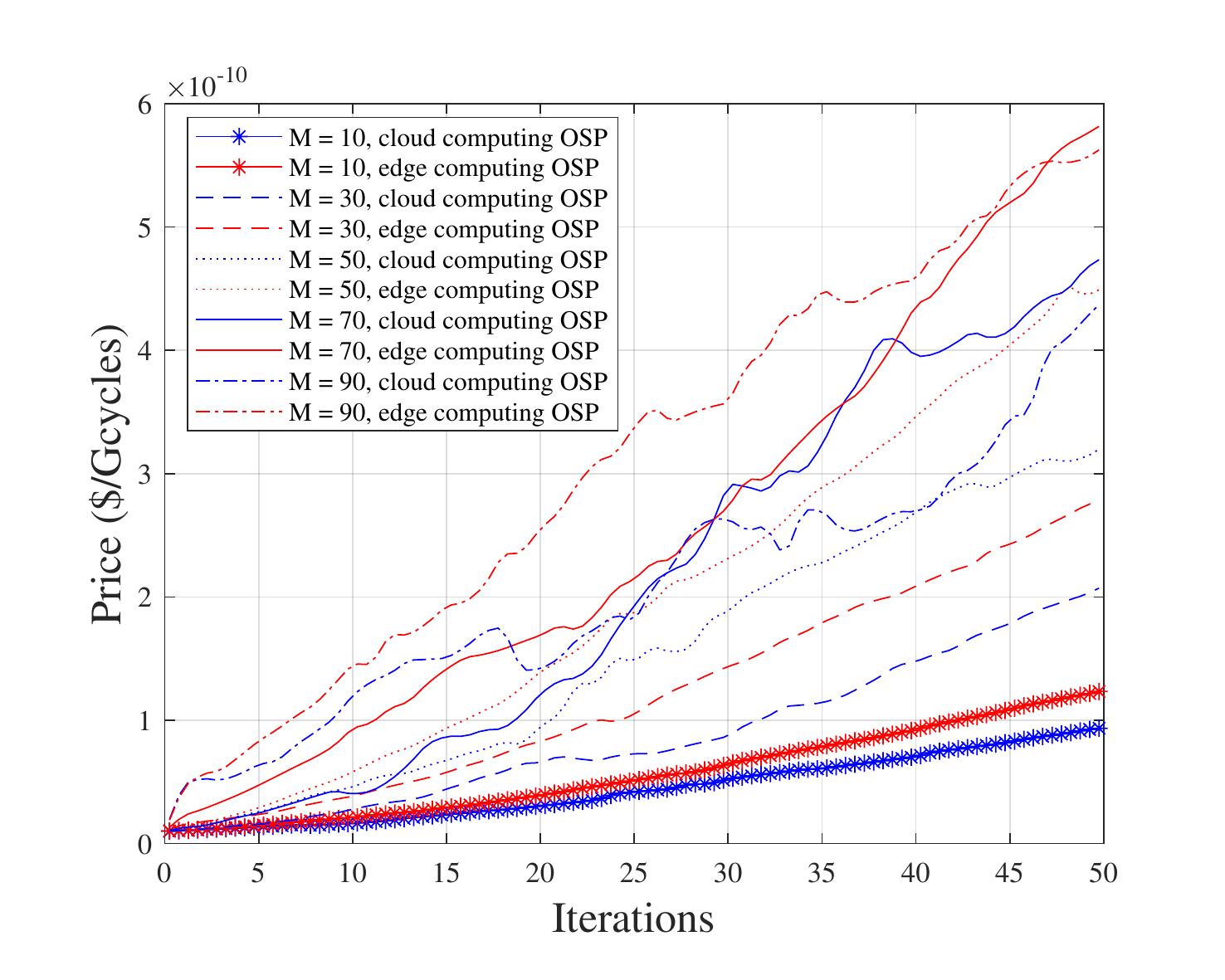}
	\DeclareGraphicsExtensions.
	\caption{Prices of OSPs versus the iterations of ISPA}
	\label{fig:a2f1}
\end{figure}

\begin{figure}[!t]
	\centering
	\includegraphics[width=3.5in]{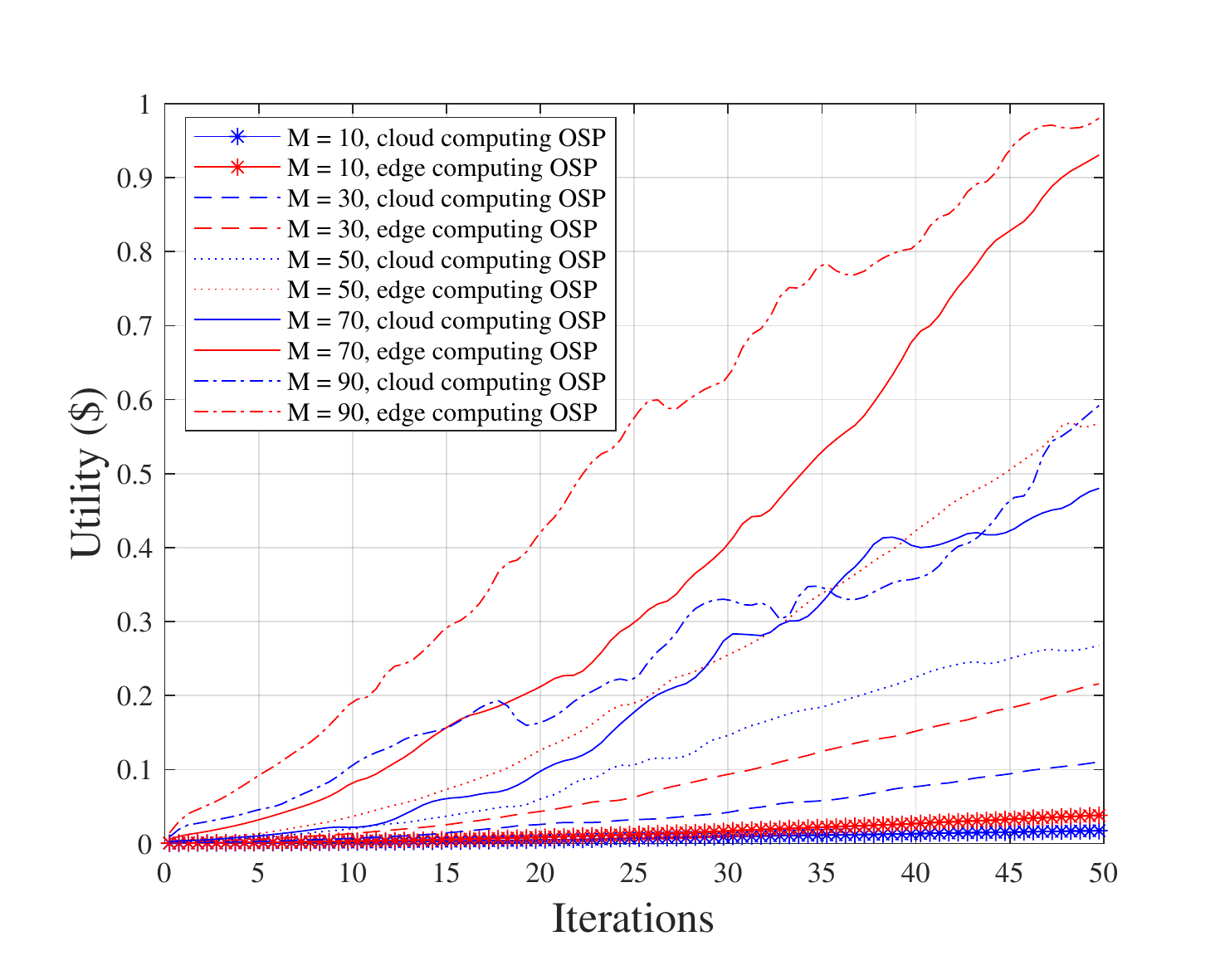}
	\DeclareGraphicsExtensions.
	\caption{Utilites of OSPs versus the iterations of ISPA}
	\label{fig:a2f2}
\end{figure}

\begin{figure}[!t]
	\centering
	\begin{subfigure}[b]{0.4\linewidth}
		\includegraphics[width=\linewidth]{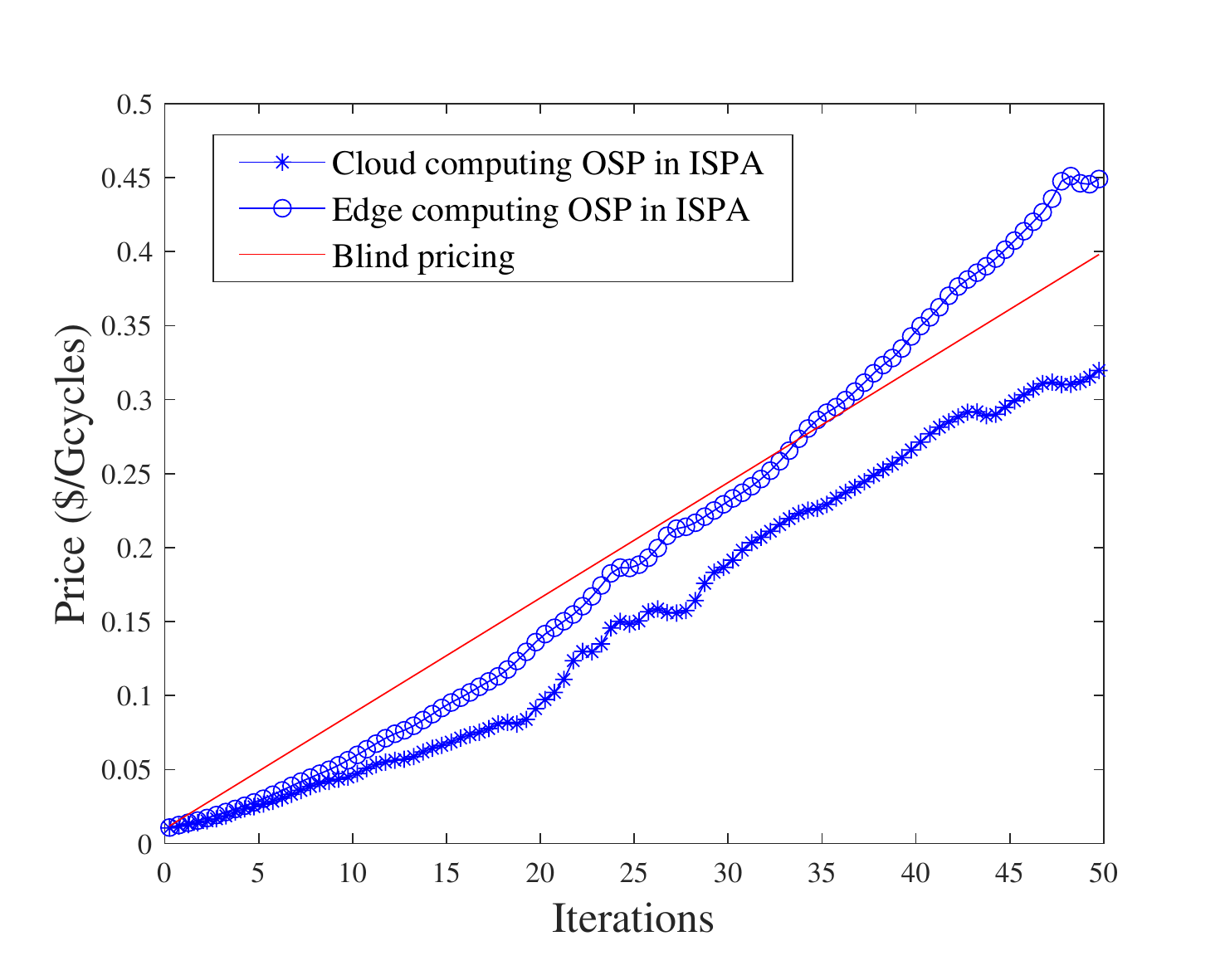}
		\caption{Prices change}
	\end{subfigure}
	\begin{subfigure}[b]{0.4\linewidth}
		\includegraphics[width=\linewidth]{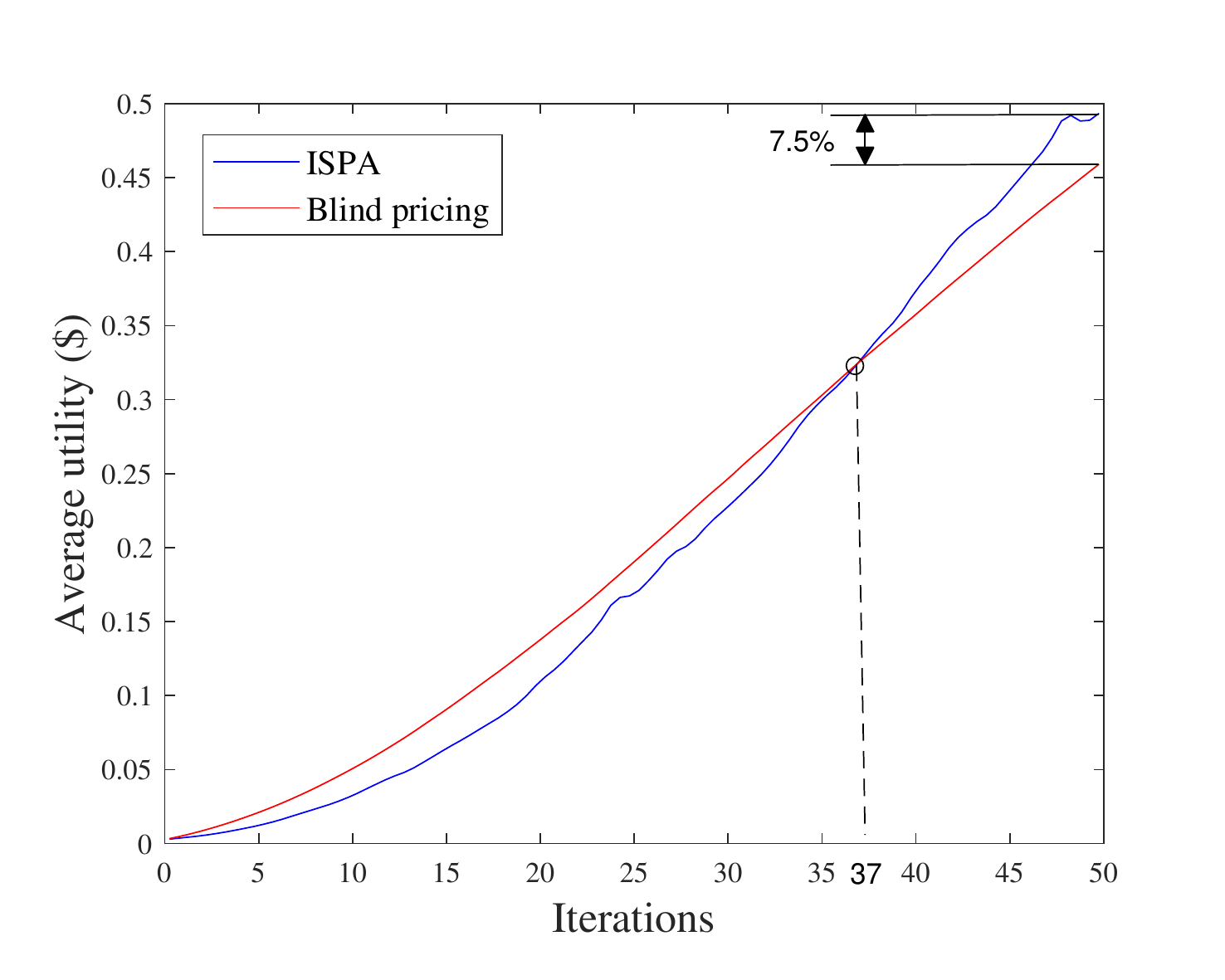}
		\caption{Utilities change}
	\end{subfigure}
	\DeclareGraphicsExtensions.
	\caption{Prices and average utility versus the iterations}
	\label{fig:a2f3}
\end{figure}

\section{Conclusion} \label{sec-7}
This paper provides a novel sdistributed mechanism to address the pricing and task offloading in MEC enabled edge-cloud systems. First, we quantify the QoE of IoT MDs with a disutility function which jointly considers the computing delay, energy consumption and payment cost. Then a multi-leader multi-follower two-tier Stackelberg game model is applied to describe the optimization problems of OSPs and IoT MDs and the existence of SE is analyzed. We first propose IPOA to obtain Nash equilibrium offloading strategies for IoT MDs in the condition that the prices charged by all OSPs are fixed. Furthermore, considering the privacy of IoT MD utility information and the non-cooperative nature of OSPs, we apply backward induction and introduce ISPA so that OSPs can adjust their prices dynamically. Through numerical experiments, results show that our proposed mechanism provides a superior solution to the optimal pricing and task offloading problem in the competitive IoT environment.  

\section*{Acknowledgment}
This work was supported in part by National Natural Science Foundations of China (61821001), YangFan Innovative \& Entrepreneurial Research Team Project of Guangdong Province, Fundamental Research Funds for the Central Universities, and Director Foundation of Beijing Key Laboratory of Work Safety Intelligent Monitoring.

\ifCLASSOPTIONcaptionsoff
  \newpage
\fi

\bibliographystyle{IEEEtran}
\bibliography{IEEEabrv,reference}

\end{document}